\DeclareSymbolFont{symbolsC}{U}{pxsyc}{m}{n}
\DeclareMathSymbol{\medcirc}{\mathbin}{symbolsC}{7}
\algnewcommand\algorithmicinput{\textbf{Input:}}
\algnewcommand\Input{\item[\algorithmicinput]}
\algnewcommand\algorithmicoutput{\textbf{Output:}}
\algnewcommand\Output{\item[\algorithmicoutput]}
\algnewcommand{\Initialize}[1]{%
  \State \textbf{Initialize:}
  \Statex \hspace*{\algorithmicindent}\parbox[t]{.8\linewidth}{\raggedright #1}
}
\newtheorem{theorem}{Theorem}[section]
\newtheorem{lemma}[theorem]{Lemma}
\newtheorem{problem}[theorem]{Problem}
\newtheorem{corollary}[theorem]{Corollary}
\newtheorem{definition}[theorem]{Definition}
\newtheorem{example}[theorem]{Example}
\newtheorem{remark}[theorem]{Remark}
\newtheorem{assumption}[theorem]{Assumption}
\numberwithin{equation}{section}
\newcommand{\R}{{\mathbb{R}}}
\newcommand{\N}{{\mathbb{N}}}
\newcommand{\ie}{{\it i.e.}}
\newcommand{\eg}{{\it e.g.}}
\newcommand{\argmin}{\textrm{arg}\min}
\newcommand{\Sys}{{dt-SCS\,\,}}
\begin{document}

\begin{abstract}
This paper focuses on synthesizing control policies for discrete-time stochastic control systems together with a lower bound on the probability that the systems satisfy the complex temporal properties. The desired properties of the system are expressed as linear temporal logic (LTL) specifications over finite traces. In particular, our approach decomposes the given specification into simpler reachability tasks based on its automata representation. We then propose the use of so-called \emph{control barrier certificate} to solve those simpler reachability tasks along with computing the corresponding controllers and probability bounds. Finally, we combine those controllers to obtain a hybrid control policy solving the considered problem. Under some assumptions, we also provide two systematic approaches for uncountable and finite input sets to search for control barrier certificates. We demonstrate the effectiveness of the proposed approach on a room temperature control and lane-keeping of a vehicle modeled as a four-dimensional single-track kinematic model. We compare our results with the discretization-based methods in the literature.

\end{abstract}

\title[Formal Synthesis of Stochastic Systems via Control Barrier Certificates]{Formal Synthesis of Stochastic Systems via Control Barrier Certificates}

\author[P. Jagtap]{Pushpak Jagtap$^1$} 
\author[S. Soudjani]{Sadegh Soudjani$^2$} 
\author[M. Zamani]{Majid Zamani$^{3,4}$} 

\address{$^1$Department of Electrical and Computer Engineering, Technical University of Munich, Germany.}
\email{pushpak.jagtap@tum.de}
\address{$^2$School of Computing, Newcastle University, United Kingdom.}
\email{Sadegh.Soudjani@newcastle.ac.uk}
\address{$^3$Computer Science Department, University of Colorado Boulder, USA.}
\email{majid.zamani@colorado.edu}
\address{$^4$Computer Science Department, Ludwig Maximilian University of Munich, Germany.}
\maketitle
\section{Introduction}
Formal synthesis of controllers for complex dynamical systems against complex specifications has gained significant attentions in the last decade \cite{tabuada2009verification,belta2017formal}. 
These specifications are usually expressed using temporal logic formulae or automata on (in)finite strings.
The synthesis problem is very challenging for systems that have continuous state spaces and are affected by uncertainties. The problem does not admit closed-form solutions in general and is hard to be solved exactly on such systems.

There have been several results in the literature utilizing approximate finite models as abstractions of the original \emph{stochastic} dynamical systems for the formal policy synthesis. Existing results include policy synthesis for discrete-time stochastic hybrid systems \cite{abate2008probabilistic,MMS2020,HS_TAC20}, control of switched discrete-time stochastic systems \cite{7029024}, and symbolic control of incrementally stable stochastic systems \cite{zamani2014symbolic}.
These approaches rely on the discretization of the state set together with a formal upper-bound on the approximation error. These approaches suffer severely from the curse of dimensionality (\ie, computational complexity grows exponentially with the dimension of the state set). To alleviate this issue, sequential gridding \cite{esmaeil2013adaptive}, discretization-free abstractions \cite{zamani2017towards,jagtap2017symbolic}, and compositional abstraction-based techniques \cite{esmaeilzadehsoudjani,lavaei2018compositional} are proposed under suitable assumptions on the system dynamics (\eg, Lipschitz continuity or incremental input-to-state stability).

{For \emph{non-stochastic} systems, discretization-free approaches based on barrier certificates were proposed for verification and synthesis to ensure safety \cite{7782377, 8430747,nilsson2018barrier, prajna2006barrier, wieland2007constructive}. 
	The authors in \cite{7364197} generalize the idea of the barrier certificate by combining it with the automata representation of LTL specifications for the verification of temporal property for nonlinear non-stochastic systems. The work is then extended for the verification of hybrid dynamical systems against syntactically co-safe LTL specifications \cite{bisoffi2018hybrid} and for the synthesis of an online control strategy for multi-agent systems enforcing LTL specifications \cite{srinivasan2018control}. 
	There are a few recent results using barrier certificates on non-stochastic systems to satisfy more general specifications. Results include the use of time-varying control barrier functions to satisfy signal temporal logic \cite{8404080} and control barrier certificate to design policies for reach and stay specification for non-stochastic switched systems \cite{Ravanbakhsh2017ACO}. Most of the synthesis results mentioned above consider prior knowledge of barrier certificates to provide online control strategies using quadratic programming. These results may not be suitable while dealing with constrained input sets which is the case in almost all real world applications.} 

{For \emph{stochastic} systems, there are very few works available in the literature to synthesize controllers against complex specifications using discretization-free approaches. The results include the synthesis of controller for continuous-time stochastic systems enforcing syntactically co-safe LTL specifications \cite{horowitz2014compositional}, where the authors use automata representation corresponding to the specifications to guide a sequence of stochastic optimal control problems.
	The paper \cite{farahani2018shrinking} considers synthesis for ensuring a lower bound on the probability of satisfying a specification in signal temporal logic. It encodes the requirements as chance constraints and inductively decomposes them into deterministic inequalities using the structure of the specification. Barrier certificates are utilized in \cite{huang2017probabilistic,steinhardt2012finite,4287147} for verification of stochastic (hybrid) systems but only with respect to the invariance property.}

Our recent results in \cite{jagtap2018temporal} present the idea of combining automata representation of a complex specification and barrier certificates, 
for formal \emph{verification} of stochastic systems without requiring any stability assumption on the dynamics of the system.
The current manuscript follows a similar direction to solve the problem of formal synthesis for stochastic systems.


To the best of our knowledge, this paper is the first to utilize the notion of control barrier certificates for the synthesis of discrete-time stochastic control systems against complex temporal logic specifications.
We consider temporal properties expressed in a fragment of LTL formulae, namely, LTL on finite traces, referred to as LTL$_F$ \cite{6942758}.
We provide a systematic approach to synthesize an {offline} control policy together with a lower bound on the probability that the LTL$_F$ property is satisfied over finite-time horizon.
{To achieve this, we utilize the notion of control barrier certificates which in general can only provide an \emph{upper bound} on the reachability probability. Since we are looking for a \emph{lower bound}, we first take the negation of the LTL$_F$ specification and decompose satisfaction of the negation into a sequence of simpler reachability tasks based on the structure of the automaton associated with the negation of the specification. Then, controllers and corresponding upper bounds are obtained for these simplified reachability tasks with the help of control barrier certificates. In the final step, we combine these controllers and probability bounds to provide a hybrid control policy and a lower bound on the probability of satisfying the original LTL$_F$ property.}

In general, there is no guarantee that barrier certificates exist for a given stochastic system. Even if we know one exists, there is no complete algorithm for its computation. In this paper, we provide two systematic approaches to search for control barrier certificates under suitable assumptions on the dynamics of the system and the shape of the potential barrier certificates. The first approach utilizes sum-of-square optimization technique \cite{parrilo2003semidefinite} and is suitable for dynamics with continuous input sets and polynomial dynamics. The second approach uses the counter-example guided inductive synthesis (CEGIS) scheme which is adapted from \cite{ravanbakhsh2015counter,Ravanbakhsh2017ACO} and is suitable for systems with finite input sets.

The remainder of this paper is structured as follows. In Section~\ref{sec_prelim}, we introduce discrete-time stochastic control systems and the linear temporal logic over finite traces. Then, we formally defined the problem considered in this paper. We discuss in Section~\ref{Sec_CBC} the notion of control barrier certificates and results for the computation of upper bound on the probability of satisfying reachability specifications. Section~\ref{runs} provides an algorithm to decompose LTL$_F$ specification into sequential reachability using deterministic finite automaton (DFA) corresponding to specification. In Section~\ref{BC_computation}, we provide results on the synthesis of control policy together with the lower bound on the probability of satisfaction of LTL$_F$ specifications using control barrier certificates. It also provides systematic approaches to search for control barrier certificates. 
Section~\ref{sec_case} demonstrates the effectiveness of the results on two case studies: (i) room temperature control and (ii) lane keeping of a vehicle. Finally, Section~\ref{sec_conclusion} concludes the paper.




\section{Preliminaries}\label{sec_prelim}
\subsection{Notations}
We denote the set of nonnegative integers by $\N_0 := \{0, 1, 2, \ldots\}$ and the set of positive integers by $\N := \{1, 2, 3, \ldots \}$. The symbols $ \R$, $\R^+,$ and $\R_0^+ $ denote the set of real, positive, and nonnegative real numbers, respectively. We use $ \mathbb{R}^{n\times m} $ to denote the space of real matrices with $ n $ rows and $ m $ columns. For a finite set $A$, we denote its cardinality by $|A|$. The logical operators `not', `and', and `or' are denoted by $\neg$, $\wedge$, and $\vee$, respectively.

We consider a probability space with the tuple $ (\Omega,\mathcal{F}_\Omega,\mathbb{P}_\Omega) $, where $\Omega$ is the sample space, $\mathcal{F}_\Omega$ is a sigma-algebra on $\Omega$ comprising the subset of $\Omega$ as events, and $\mathbb{P}_\Omega$ is a probability measure that assigns probabilities to events. {We assume that random variables introduced in this article are measurable functions of the form $X:(\Omega,\mathcal{F}_\Omega)\rightarrow(S_X,\mathcal{F}_X)$ mapping measurable space $(\Omega,\mathcal{F}_\Omega)$ to another measurable space $(S_X,\mathcal{F}_X)$ and assigns probability measure to $(S_X,\mathcal{F}_X)$ according to $Prob\{A\}=\mathbb{P}_\Omega\{X^{-1}(A)\}$ for any $A\in\mathcal{F}_X$.
In words, $S_X$ is the domain of the random variable $X$ and $\mathcal{F}_X$ is a collection of subsets of $S_X$ such that $X$ assigns probability to the elements of this collection.} We often directly discuss the probability measure on $(S_X,\mathcal{F}_X)$ without explicitly mentioning the underlying probability space and the function $X$ itself.  


\subsection{Discrete-time stochastic control systems}
\label{Sec_dtSCS}
{In this work, we consider discrete-time stochastic control systems (dt-SCS) that are extensively employed as models of systems under uncertainty in economics and finance \cite{Ev_Ar87} and in many engineering systems \cite{BS96}. Examples of using {dt-SCS} include modeling inventory-production systems \cite{hll1996}, demand response in energy networks \cite{soudjani2014formal}, and analyzing max-plus linear systems in transportation \cite{7335578}.}

A \Sys is
given by the tuple $\mathfrak S=(X,V_{\textsf w},U,w,f)$, where {$X$ is the state set, $V_{\textsf w}$ is the uncertainty set, and $U$ is the input set of the system}. We denote by $(X,\mathcal{B}(X))$ the measurable space with $\mathcal{B}(X)$ being the Borel sigma-algebra on the state space. Notation $w$ denotes a sequence of independent and identically distributed (i.i.d.) random variables on the set $V_{\textsf w}$ as $w:=\{w(k):\Omega\rightarrow V_{\textsf w}, \ k\in\mathbb N_0\}$.  
The map $f : X \times U \times V_{\textsf w} \rightarrow X$ is a measurable function characterizing the state evolution of the system. For a given initial state $x(0)\in X$, the state evolution can be written as
\begin{equation}
 x(k+1)=f(x(k),u(k),w(k)), \ \ \ k\in\mathbb{N}_0.
\label{DTSCS}
\end{equation}
We are interested in synthesizing a control policy $\rho$ that guarantees a potentially tight lower bound on the probability that the system $\mathfrak S$ satisfies a specification expressed as a temporal logic property. The syntax and semantics of the class of specifications dealt with in this paper are provided in the next subsection. In this work, we consider \textit{history-dependent policies} given by $\rho=(\rho_0,\rho_1,\ldots,\rho_n,\ldots)$ with functions $\rho_n: H_n\rightarrow U$, where $H_n$ is a set of all $n$-histories $h_n$ defined as $h_n:=(x(0),u(0),x(1),u(1),\ldots,x(n-1),u(n-1),x(n))$. A subclass of policies are called \emph{stationary} and are defined as $\rho=(u,u,\ldots,u,\ldots)$ with a function $u: X\rightarrow U$. In stationary policies, the mapping at time $n$ depends only on the current state $x_n$ and does not change over time.

\subsection{Linear temporal logic over finite traces}
\label{Sec_LTL}
In this subsection, we introduce linear temporal logic over finite traces, referred to as LTL$_F$ \cite{de2013linear}, which will be used later to express temporal specifications for our synthesis problem. Properties LTL$_F$ use the same syntax of LTL over infinite traces given in \cite{baier2008principles}. The LTL$_F$ formulas over a set $ \Pi $ of atomic propositions are obtained as follows:
\begin{align*}
 \varphi ::=  &\textsf{ true} \mid p \mid \neg \varphi \mid \varphi_1 \wedge \varphi_2 \mid \varphi_1 \vee\varphi_2\mid\medcirc \varphi  \mid  \lozenge\varphi \mid \square\varphi \mid \varphi_1\mathcal{U}\varphi_2 ,
 \end{align*}
 where $p \in \Pi$, $\medcirc $ is the next operator, $\lozenge$ is eventually, $\square$ is always, and $\mathcal{U}$ is until. The semantics of LTL$_F$ is given in terms of \textit{finite traces}, \ie, finite words $\sigma$, denoting a finite non-empty sequence of consecutive steps over $\Pi$. We use $|\sigma |$ to represent the length of $\sigma$ and $\sigma_i$ as a propositional interpretation at the $i$th position in the trace, where $0\leq i < |\sigma |$. Given a finite trace $\sigma$ and an LTL$_F$ formula $\varphi$, we inductively define when an LTL$_F$ formula $\varphi$ is true at the $i$th step $(0\leq i <|\sigma |)$ and denoted by $\sigma,i\models\varphi$, as follows:
 \begin{itemize}
 \item $\sigma,i\models \textsf{true}$;
  \item $\sigma,i\models p$, for $p\in\Pi$ iff $p\in\sigma_i$;
   \item $\sigma,i\models \neg\varphi$ iff $\sigma,i\not\models\varphi$;
    \item $\sigma,i\models \varphi_1\wedge\varphi_2$ iff $\sigma,i\models\varphi_1$ and $\sigma,i\models\varphi_2$;
      \item $\sigma,i\models \varphi_1\vee\varphi_2$ iff $\sigma,i\models\varphi_1$ or $\sigma,i\models\varphi_2$;  
     \item $\sigma,i\models \medcirc \varphi$ iff $i<|\sigma |-1$ and $\sigma,i+1\models\varphi$; 
     \item $\sigma,i\models  \lozenge\varphi$ iff for some $j$ such that $i\leq j< |\sigma |$, we have $\sigma,j\models\varphi$; 
     \item $\sigma,i\models  \square\varphi$ iff for all $j$ such that $i\leq j<|\sigma |$, we have $\sigma,j\models\varphi$; 
     \item $\sigma,i\models  \varphi_1\mathcal{U}\varphi_2$ iff for some $j$ such that $i\leq j< |\sigma |$, we have $\sigma,j\models\varphi_2$, and for all $k$ s.t. $i\leq k<j$, we have $\sigma,k\models\varphi_1$. 
 \end{itemize}
 The formula $\varphi$ is true on $\sigma$, denoted by $\sigma\models\varphi$, if and only if $\sigma,0\models\varphi$. The set of all traces that satisfy the formula $\varphi$ is called the \emph{language} of formula $\varphi$ and is denoted by $\mathcal{L}(\varphi)$.  Notice that we also have the usual boolean equivalences such as $\varphi_1\vee\varphi_2\equiv \neg(\neg\varphi_1\wedge\neg\varphi_2)$, 
 $\varphi_1\implies\varphi_2\equiv \neg\varphi_1\vee\varphi_2$, $\lozenge\varphi \equiv \textsf{true } \mathcal{U} \varphi$, and $\square\varphi\equiv\neg\lozenge\neg\varphi$.
 
Next, we define deterministic finite automata which later serve as equivalent representations of LTL$_F$ formulae.
\begin{definition}\label{DFA1}
A deterministic finite automaton $($DFA$)$ is a tuple $\mathcal{A}=(Q,Q_0,\Sigma, \delta,F)$, where $Q$ is a finite set of states, $Q_0\subseteq Q$ is a set of initial states, $\Sigma$ is a finite set $($a.k.a. alphabet$)$, $\delta: Q\times\Sigma\rightarrow Q$ is a transition function, and $F\subseteq Q$ is a set of accepting states.
\end{definition}
We use notation $q\overset{\sigma}{\longrightarrow} q'$ to denote transition $(q,\sigma,q')\in\delta$.
A finite word $\sigma=(\sigma_0,\sigma_1,\ldots,\sigma_{n-1})\in \Sigma^n$ is accepted by DFA $\mathcal{A}$ if there exists a finite state run $q=(q_0,q_1,\ldots,q_{n})\in Q^{n+1}$ such that $q_0\in Q_0$, $q_i \overset{\sigma_i}{\longrightarrow} q_{i+1}$ for all $0\leq i< n$ and $q_{n}\in F$. The set of words accepted by $\mathcal{A}$ is called the accepting language of $\mathcal{A}$ and is denoted by $\mathcal{L}(\mathcal{A})$. We denote the set of successor states of a state $q\in Q$ by $\Delta(q)$.

The next result shows that every LTL$_F$ formula can be accepted by a DFA.

\begin{theorem}[\cite{zhu2019first,de2015synthesis}]\label{DFA}
Every LTL$_F$ formula $\varphi$ can be translated to a DFA $\mathcal{A}_\varphi$
that accepts the same language as $\varphi$, \ie, $\mathcal{L}(\varphi)=\mathcal{L}(\mathcal A_\varphi)$.
\end{theorem}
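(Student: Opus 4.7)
The plan is to proceed via a three-stage translation, inducting on the structure of the LTL$_F$ formula $\varphi$. The core idea is to first build a compact symbolic automaton whose states correspond to subformulas of $\varphi$, and then progressively remove alternation and nondeterminism to arrive at the claimed DFA.

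First I would construct an alternating finite automaton (AFA) $\mathcal A^a = (Q^a, q_0^a, \Sigma, \delta^a, F^a)$ with $Q^a = \mathit{sub}(\varphi)$ (the set of subformulas of $\varphi$), initial state $q_0^a = \varphi$, alphabet $\Sigma = 2^\Pi$, and transition function $\delta^a$ returning a positive Boolean combination over $Q^a$, defined by recursion on the formula: $\delta^a(p,\sigma) = \top$ if $p\in\sigma$ and $\bot$ otherwise; $\delta^a(\psi_1\wedge\psi_2,\sigma) = \delta^a(\psi_1,\sigma)\wedge\delta^a(\psi_2,\sigma)$; $\delta^a(\medcirc\psi,\sigma) = \psi$; $\delta^a(\psi_1\,\mathcal U\,\psi_2,\sigma) = \delta^a(\psi_2,\sigma)\vee(\delta^a(\psi_1,\sigma)\wedge(\psi_1\,\mathcal U\,\psi_2))$; and the analogous clauses for $\lozenge$ and $\square$. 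For negation I would first push $\neg$ down to the atoms using the dualities $\neg(\psi_1\,\mathcal U\,\psi_2)$, $\neg\medcirc\psi$, $\neg\lozenge\psi \equiv \square\neg\psi$, etc., paying special attention to the \emph{finite-trace} dualities (e.g., the dual of next must account for the end of the trace). The accepting set $F^a$ collects those subformulas that hold on the empty continuation, namely $\mathsf{true}$, $\square\psi$, and $\psi_1\,\mathcal U\,\psi_2$ when $\psi_2$ already holds, etc.; crucially $\medcirc\psi\notin F^a$ for any $\psi$, reflecting that $\medcirc$ is false at the last position of a finite trace.

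Next I would apply the standard ``miyano--hayashi''-style conversion from AFA to NFA, where a state of the NFA is a set $S\subseteq Q^a$ of subformulas interpreted as their conjunction; the NFA accepts a word iff there is a run whose final state is contained in the set of satisfied-on-empty subformulas. The determinization step is then the usual subset construction, yielding $\mathcal A_\varphi = (Q, Q_0, \Sigma, \delta, F)$ exactly in the form of Definition~\ref{DFA1}. Correctness, $\mathcal L(\mathcal A_\varphi) = \mathcal L(\varphi)$, I would prove by induction on $\varphi$: the base cases $p$ and $\mathsf{true}$ are immediate from $\delta^a$, and each inductive case reduces to the semantic clause for the corresponding operator given in Section~\ref{Sec_LTL}, using the one-step unfolding $\sigma,i\models\psi \iff \sigma,i\models \delta^a(\psi,\sigma_i)$ with variables in $Q^a$ interpreted via $\sigma,i{+}1\models(\cdot)$.

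The main obstacle, and the genuinely new work relative to the infinite-trace LTL-to-B\"uchi translation, is handling the finite-trace semantics consistently throughout these steps. Specifically, I expect the subtlety to lie in (i) the correct choice of $F^a$ so that $\lozenge$ and $\mathcal U$ obligations that are \emph{not} discharged within $|\sigma|$ steps cause rejection while $\square$-obligations vacuously succeed, and (ii) the finite-trace dualization of $\medcirc$, which is not self-dual as in infinite-trace LTL because $\medcirc\psi$ and $\neg\medcirc\psi$ are both false at the last position. Once these two points are nailed down, the Miyano--Hayashi and powerset constructions are mechanical, and the resulting DFA inherits the finite-trace acceptance condition automatically. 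The size of $\mathcal A_\varphi$ is doubly exponential in $|\varphi|$ in the worst case, but for the purposes of Theorem~\ref{DFA} only existence is needed.
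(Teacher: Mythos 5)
The paper does not prove this statement at all---it is imported directly from \cite{zhu2019first,de2015synthesis}---and your sketch follows essentially the same route as the cited De Giacomo--Vardi-style construction (negation normal form with finite-trace dualities, an alternating automaton over subformulas with acceptance encoding satisfaction on the empty continuation, then removal of alternation and the subset construction), so it matches the source the paper relies on. One minor remark: over finite words the AFA-to-NFA step is just the obligation-set (positive Boolean) construction, and the Miyano--Hayashi breakpoint machinery you invoke is only needed for B\"uchi acceptance on infinite words, so citing it is harmless but unnecessary.
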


Such $\mathcal{A}_\varphi$ in Theorem \ref{DFA} can be constructed explicitly or symbolically using existing tools, such as SPOT \cite{duret2016spot} and MONA \cite{henriksen1995mona}.

\begin{remark}
For a given LTL$_F$ formula $\varphi$ over atomic propositions $\Pi$, the associated DFA $\mathcal A_\varphi$ is usually constructed over the alphabet $\Sigma = 2^\Pi$.
Solution process of a system $\mathfrak S$ is also connected to the set of words by a labeling function $L$ from the state set to the alphabet $\Sigma$. Without loss of generality, we work with the set of atomic propositions directly as the alphabet rather than its power set.
\end{remark}
\subsection{Property satisfaction by stochastic control systems}
For a given \Sys $\mathfrak S=(X,V_{\textsf w},U,w,f)$ with dynamics \eqref{DTSCS}, the system $\mathfrak S$ is connected to LTL$_F$ formulas with the help of a measurable labeling function $L: X \rightarrow \Pi$, where $\Pi$ is the set of atomic propositions.

\begin{definition}\label{sys_trace}
Consider a finite state sequence $\textbf{x}_N=(x(0),x(1),\ldots,x(N-1))\in X^N$, $N\in\N$, and labeling function $L: X \rightarrow \Pi$. Then, the corresponding trace is given by $L(\textbf{x}_N):=(\sigma_0,\sigma_1,\ldots,\sigma_{N-1}) \in\Pi^N$ if we have $\sigma_k=L(x(k))$ for all $k \in\{0,1,\ldots,N-1\}$.
 \end{definition}
 Note that we abuse the notation by using map $L(\cdot)$ over the domain $X^N$, i.e. $L(x(0),x(1),\ldots,x(N-1))\equiv (L(x(0)),L(x(1)),\ldots,L(x(N-1)))$.
 Their distinction is clear from the context.
 Next, we define the probability that a \Sys $\mathfrak S$ satisfies LTL$_F$ formula $\varphi$ over traces of length $N$.
 \begin{definition}
Consider a \Sys $\mathfrak S=(X,V_{\textsf w},U,w,f)$ and a LTL$_F$ formula $\varphi$ over $\Pi$. We denote by $\mathbb{P}^{x_0}_{\rho}\{L(\textbf{x}_N) \models \varphi\}$ the probability that $\varphi$ is satisfied by the state evolution of the system $\mathfrak S$ over a finite-time horizon $[0,N)\subset\N$ starting from initial state $x(0)=x_0\in X$ under control policy $\rho$. 
 \end{definition}

\begin{remark}
The set of atomic propositions $\Pi=\{p_0,p_1,\ldots,p_M\}$ and the labeling function $L: X \rightarrow \Pi$ provide a measurable partition of the state set $X = \cup_{i=1}^M X_i$ as  $X_i:=L^{-1}(p_i)$. We assume that $X_i\neq \emptyset$ for any $i$. This assumption is without loss of generality since all the atomic propositions $p_i$ with $L^{-1}(p_i) = \emptyset$ can be replaced by $(\neg\textsf{true})$ without affecting the probability of satisfaction.
\end{remark}

\subsection{Problem formulation}
\begin{problem}\label{SCS_prob}
Given a \Sys $\mathfrak S = (X,V_{\textsf w},U,w,f)$ with dynamics \eqref{DTSCS}, a LTL$_F$ specification $\varphi$ of length $N$ over a set of atomic propositions $\Pi=\{p_0,p_1,\ldots,p_M\}$, a labeling function $L: X \rightarrow \Pi$,
and real value $\vartheta\in(0,1)$,
 compute a control policy $\rho$ (if existing) such that $\mathbb{P}^{x_0}_\rho\{L(\textbf{x}_N) \models \varphi\}\ge \vartheta$ for all $x_0\in L^{-1}(p_i)$ and some $i\in\{1,2,\ldots,M\}$.
\end{problem}


Finding a solution to Problem~\ref{SCS_prob} (if existing) is difficult in general. In this paper, we give a computational method that is sound in solving the problem. Our approach is to compute a policy $\rho$ together with a lower bound $\underline{\vartheta}$. We try to find the largest lower bound, which then can be compared with $\vartheta$ and gives $\rho$ as a solution for Problem~\ref{SCS_prob} if $\underline{\vartheta}\ge \vartheta$.
{To solve this problem, we utilize the notion of control barrier certificates (discussed in Section~\ref{Sec_CBC}). In general, this notion is useful for providing an upper bound on the reachability probability. The negation of LTL$_F$ properties can then be equivalently represented as a sequence of reachability problems using a DFA. Therefore, instead of computing a control policy that guarantees a lower bound $\underline{\vartheta}$ on the probability satisfaction of the LTL$_F$ specification,
	we compute a policy that guarantees an upper bound on the probability satisfaction of its negation, \ie, $\mathbb{P}^{x_0}_\rho\{L(\textbf{x}_N) \models\neg \varphi\}\le \overline{\vartheta}$ for any $x_0\in L^{-1}(p_i)$ and some $i\in\{0,1,\ldots,M\}$.} Then for the same control policy the lower bound can be easily obtained as $\underline{\vartheta}=1-\overline{\vartheta}$.  
This is done by constructing a DFA $\mathcal{A}_{\neg \varphi}=(Q,Q_0,\Pi,\delta,F)$ that accepts all finite words over $\Pi$ satisfying $\neg\varphi$.

For the sake of illustrating the results better, we provide the following running example throughout the paper.


\def\example{\par\noindent{\textbf{Example 1.}} \ignorespaces}
\def\endexample{}

\begin{example}
\begin{figure}[b] \label{SCS_fig1}
			\centering
			\subfigure[]{\includegraphics[scale=0.6, height = 5cm]{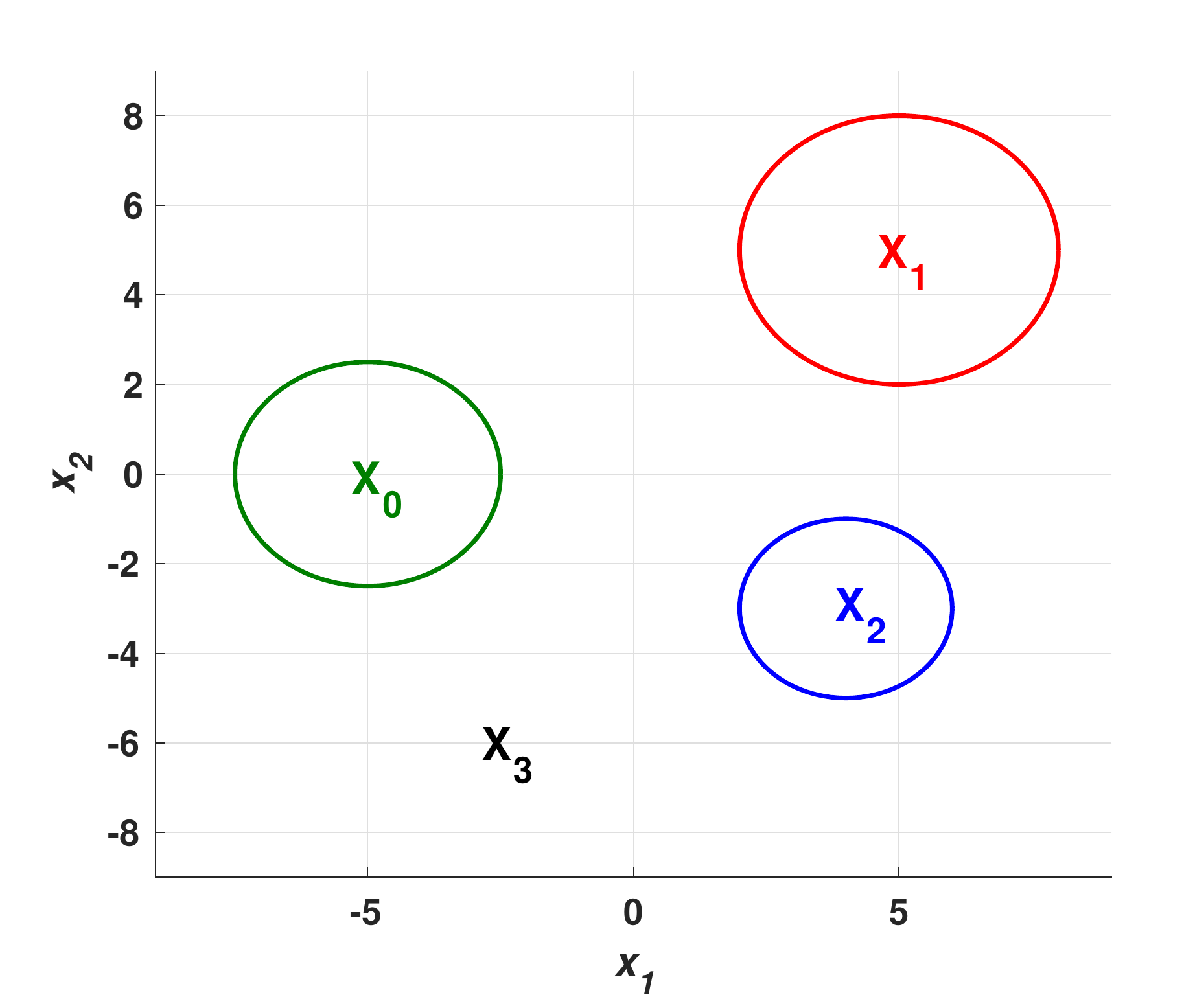}}  \hspace{.5em}
			\subfigure[]{\includegraphics[scale=0.5, height = 5cm]{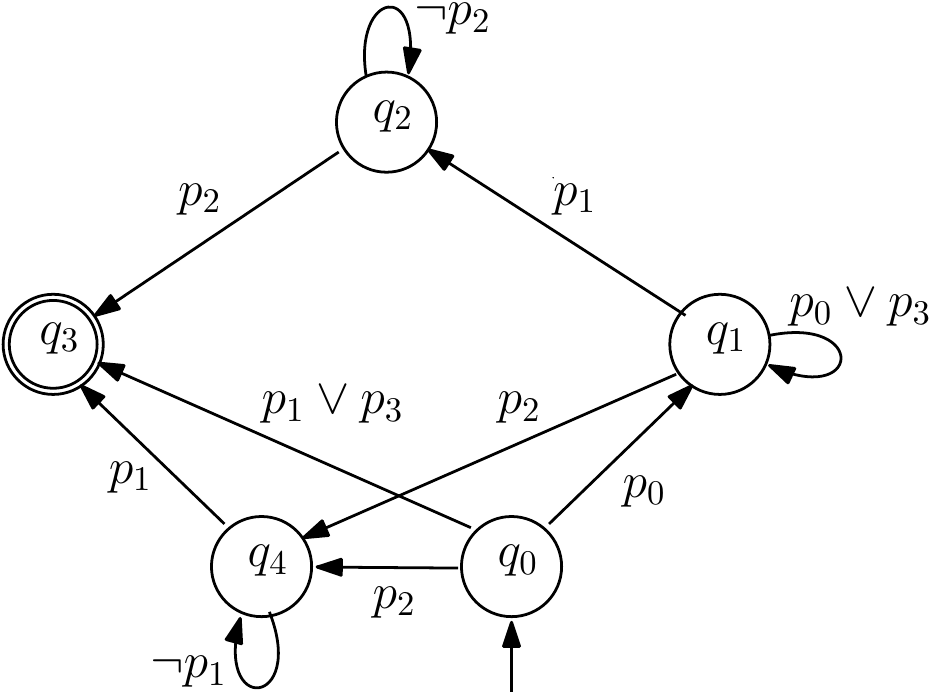}}\\
			\caption{(a) State set and regions of interest for Example~1, (b) DFA $\mathcal{A}_{\neg \varphi}$ that accepts all traces satisfying $\neg \varphi$ where $\varphi$ is given in \eqref{LTL_f}.}
\end{figure}
Consider a two-dimensional \Sys $\mathfrak S = (X,V_{\textsf w},U,w,f)$ with $X = V_{\textsf w} = \mathbb R^2$, $U = \mathbb R$ and dynamics
\begin{align}
x_1(k+1)&=x_1(k)-0.01 x_2^2(k)+0.5 w_1(k),\nonumber\\
x_2(k+1)&=-0.01 x_1(k) x_2(k)+u(k)+0.5 w_2(k),
\end{align}
where $u(\cdot)$ is a control input and $w_1(k)$, $w_2(k)$ are standard normal random variables that are independent from each other and for any $k\in \mathbb N_0$.
The set of atomic propositions is given by $\Pi=\{p_0,p_1,p_2,p_3\}$, with labeling function $L(x) = p_i$ for any $x\in X_i$, $i\in\{0,1,2,3\}$.
The sets $X_i$ are defined as 
\begin{align*}
X_0&=\{(x_1,x_2)\in X \mid (x_1+5)^2+x_2^2\leq 2.5\},\\
X_1&=\{(x_1,x_2)\in X \mid (x_1-5)^2+(x_2-5)^2\leq 3 \},\\
X_2&=\{(x_1,x_2)\in X \mid (x_1-4)^2+(x_2+3)^2\leq 2 \}, \text{ and }\\
X_3& = X\setminus (X_0\cup X_1\cup X_2).
\end{align*}
These sets are shown in Figure 1(a).
We are interested in computing a control policy $\rho$ that provides a lower bound on the probability that the trajectories of $\mathfrak S$ of length $N$ satisfies the following specification: 
\begin{itemize}
\item 
If it starts in $X_0$, it will always stay away from $X_1$ or always stay away from $X_2$. If it starts in $X_2$, it will always stay away from $X_1$.
\end{itemize}
This property can be expressed by the LTL$_F$ formula
\begin{equation}
\label{LTL_f}
\varphi=(p_0\wedge(\square \neg p_1 \vee \square\neg p_2))\vee (p_2\wedge\square\neg p_1).
\end{equation}
The DFA corresponding to the negation of $\varphi$ in \eqref{LTL_f} is shown in Figure 1(b).
\end{example}


\section{Control Barrier Certificates}\label{Sec_CBC}
In this section, we introduce the notion of control barrier certificate which will later serve as the core element for solving Problem~\ref{SCS_prob}. {Intuitively, control barrier certificates are relaxed versions of supermartingales that are decreasing in expectation along the trajectories of the system up to a constant. Once a barrier certificate is found while satisfying some conditions, it can give upper bounds on the reachability probability of system trajectories.} 
\begin{definition}
\label{CBC_def}
A function $B:X\rightarrow\mathbb{R}_0^+$ is a control barrier certificate for a \Sys $\mathfrak S=(X,V_{\textsf w},U,w,f)$ if for any state $x\in X$, there exists an input $u\in U$ such that
\begin{align}
\label{cmart}
 \mathbb{E}[B(f(x,u,w))\mid x,u]\leq B(x)+c,
\end{align}
 for some constant $c\geq 0$.
\end{definition}
If the set of control inputs $U$ is finite, one can rewrite Definition~\ref{CBC_def} as follows.
\begin{definition}
\label{CBC_def1}
A function $B:X\rightarrow\mathbb{R}_0^+$ is a control barrier certificate for a \Sys $\mathfrak S=(X,V_{\textsf w},U,w,f)$ with $U=\{u_1,u_2,\ldots,u_l\}$, $l\in\N$, if 
\begin{align}\label{cmart1}
\min_{u\in U} \mathbb{E}[B(f(x,u,w))\mid x,u]\leq B(x)+c \quad \forall x\in X,
\end{align}
for some constant $c\geq 0$.
\end{definition}
{\begin{remark}\label{key1}
		Note that  conditions~\eqref{cmart}-\eqref{cmart1} are relaxed versions of so-called supermartingale condition. This is due to the positive constant $c$ on the right-hand side. When $c=0$, the function $B(\cdot)$ becomes supermartingale for $\mathfrak S$.  
\end{remark}}
\begin{remark} \label{control_policy}
The above definitions associate a stationary policy $u:X\rightarrow U$ to a control barrier certificate. Definition~\ref{CBC_def} gives such a policy according to the existential quantifier on the input for any state $x\in X$. Definition~\ref{CBC_def1} gives the policy as the $\argmin$ of the left-hand side of inequality \eqref{cmart1}.
In case of discrete inputs, $u(x)$ can be selected as an element of $\{u \in U\mid \mathbb{E}[B(f(x,u,w))\mid x,u]\leq B(x)+c\}$. In other words, Definition~\ref{CBC_def1} provides regions of state-set in which the particular control input is valid and is given as $\mathsf X_i:=\{x\in X\mid \mathbb{E}[B(f(x,u_i,w))\mid x,u_i]\leq B(x)+c\}$ for all $i\in\{1,2,\ldots,l\}$ and $\bigcup_{i}\mathsf X_i=X$.
\end{remark}

We provide the following lemma and use it in the sequel. This lemma is a direct consequence of \cite[Theorem 3]{kushnerbook} and is also utilized in \cite[Theorem II.1]{steinhardt2012finite}.

\begin{lemma}\label{CBC_lem1}
Consider a \Sys $\mathfrak S=(X,V_{\textsf w},U,w,f)$ and let $B: X\rightarrow \mathbb R_0^{+}$ be a control barrier certificate as given in Definition~\ref{CBC_def} (or Definition~\ref{CBC_def1}) with constant $c$ and stationary policy $u:X\rightarrow U$ {as discussed in Remark \ref{control_policy}}. Then for any constant $\lambda>0$ and any initial state $x_0\in X$, 
\begin{align}
\mathbb{P}^{x_0}_{u}\{\sup_{0\leq k < T_d}B(x(k))\geq\lambda\mid x(0)=x_0\}\leq\frac{B(x_0)+cT_d}{\lambda}.
\label{eq_CBC_lem1}
\end{align}
\end{lemma}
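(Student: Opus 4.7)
The plan is to turn $B(x(k))$ into a bona-fide (sub/super)martingale by compensating the drift $c$, apply the optional stopping theorem at the first-passage time of the level $\lambda$, and then extract the probability via a Markov-type inequality from the fact that $B$ is nonnegative and equal to at least $\lambda$ at the stopping time on the event of interest.

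Step one: fix a starting state $x_0$ and apply the stationary policy $u(\cdot)$ associated with $B$ by Remark~\ref{control_policy}. Under $u$, the closed-loop dynamics $x(k+1)=f(x(k),u(x(k)),w(k))$ with i.i.d.\ noise $w(k)$ define a time-homogeneous Markov process on $(X,\mathcal{B}(X))$ with a natural filtration $\{\mathcal{F}_k\}$. Either Definition~\ref{CBC_def} or \ref{CBC_def1} then yields, for every $k\in\N_0$,
\begin{equation*}
\EE\bigl[B(x(k+1))\,\big|\,\mathcal{F}_k\bigr] \;\leq\; B(x(k))+c.
\end{equation*}
Defining $Y(k):=B(x(k))-ck$, this inequality becomes $\EE[Y(k+1)\mid\mathcal{F}_k]\leq Y(k)$, so $\{Y(k)\}$ is an $\{\mathcal{F}_k\}$-supermartingale (note that $Y(k)$ need not be nonnegative, but this is not required for optional stopping).

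Step two: introduce the stopping time $\tau:=\inf\{k\in\N_0 : B(x(k))\geq\lambda\}$ (with $\tau=\infty$ if the set is empty), and observe
\begin{equation*}
\Bigl\{\sup_{0\leq k<T_d}B(x(k))\geq\lambda\Bigr\}=\{\tau<T_d\}.
\end{equation*}
Since $\tau\wedge T_d$ is a bounded stopping time, Doob's optional stopping theorem for supermartingales gives $\EE[Y(\tau\wedge T_d)]\leq Y(0)=B(x_0)$, i.e.,
\begin{equation*}
\EE\bigl[B(x(\tau\wedge T_d))\bigr]\;\leq\;B(x_0)+c\,\EE[\tau\wedge T_d]\;\leq\;B(x_0)+c\,T_d.
\end{equation*}

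Step three: extract the probability. Because $B\geq 0$ everywhere, and because on the event $\{\tau<T_d\}$ we have $B(x(\tau\wedge T_d))=B(x(\tau))\geq\lambda$ by definition of $\tau$,
\begin{equation*}
\EE\bigl[B(x(\tau\wedge T_d))\bigr]\;\geq\;\lambda\cdot\PP^{x_0}_{u}\{\tau<T_d\}.
\end{equation*}
Combining the two displays and dividing by $\lambda>0$ yields the claimed bound \eqref{eq_CBC_lem1}.

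I do not anticipate a genuinely hard step: the argument is the standard Doob/Kushner-style supermartingale inequality, and the only mild subtlety is the bookkeeping around the stationary feedback, which guarantees the iterated conditional-expectation inequality holds along the whole trajectory rather than only one step. If one wished to avoid invoking optional stopping explicitly, the same bound can be obtained inductively by writing $\EE[Y((k+1)\wedge\tau)]\leq\EE[Y(k\wedge\tau)]$ and iterating from $k=0$ to $T_d-1$, which gives a self-contained elementary proof using only the tower property and the defining inequality \eqref{cmart}.
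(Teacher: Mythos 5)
Your proof is correct and is essentially the argument the paper relies on: the paper omits the proof and defers to the Kushner-style supermartingale inequality (Theorem~3 of the cited book), which is exactly your compensated-supermartingale $Y(k)=B(x(k))-ck$ plus optional stopping at $\tau\wedge T_d$ and a Markov-type bound on the event $\{\tau<T_d\}$. The only point worth adding for completeness is that integrability of $B(x(k))$ (needed for optional stopping) follows inductively from $\EE[B(x(k+1))\mid\mathcal F_k]\le B(x(k))+c$, which your elementary iterated version already handles.
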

\begin{proof}
The proof is similar to that of Theorem 3 in \cite{kushnerbook} and is omitted here.
\end{proof}
Next theorem shows that a control barrier certificate can give an upper bound on the probability of satisfying reachability specification. This theorem is inspired by the result of \cite[Theorem 15]{4287147} that uses supermartingales for reachability analysis of continuous-time stochastic systems.


\begin{theorem}
\label{barrier}
Consider a \Sys $\mathfrak S=(X,V_{\textsf w},U,w,f)$ and sets $X_a, X_b\subseteq X$. Suppose there exist a control barrier certificate $B:X\rightarrow\mathbb{R}_0^+$ as defined in Definition~\ref{CBC_def} (or Definition~\ref{CBC_def1}) with constant $c\ge 0$ and stationary policy $u:X\rightarrow U$ {as discussed in Remark \ref{control_policy}}. If there is a constant $\gamma\in[0,1]$ such that
\begin{align}
&B(x)\leq\gamma &\forall x\in X_a,\label{ine_1}\\
&B(x)\geq 1 &\forall x\in X_b\label{ine_2},
\end{align}
then the probability that the state evolution of $\mathfrak S$ starts from any initial state $x_0\in X_a$ and reaches $X_b$ under policy $u(\cdot)$ within time horizon $[0,T_d)\subseteq \N_0$ is upper bounded by $\gamma+cT_d$.
\end{theorem}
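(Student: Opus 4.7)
The plan is to reduce the reachability probability to a supremum-exceedance event for the barrier certificate, and then invoke Lemma~\ref{CBC_lem1} with $\lambda = 1$. Concretely, fix any initial state $x_0 \in X_a$ and consider the event
\[
R := \{\exists\, k \in [0, T_d) : x(k) \in X_b\}
\]
under the stationary policy $u(\cdot)$ associated with $B$ (as described in Remark~\ref{control_policy}).

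First I would observe that on the event $R$, there exists some $k \in [0, T_d)$ with $x(k) \in X_b$, and by condition~\eqref{ine_2} this forces $B(x(k)) \geq 1$. Consequently $\sup_{0 \leq k < T_d} B(x(k)) \geq 1$ pointwise on $R$, so
\[
\mathbb{P}^{x_0}_{u}\{R\} \;\leq\; \mathbb{P}^{x_0}_{u}\Bigl\{\sup_{0 \leq k < T_d} B(x(k)) \geq 1 \;\Big|\; x(0) = x_0\Bigr\}.
\]

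Next I would apply Lemma~\ref{CBC_lem1} with $\lambda = 1$, which yields
\[
\mathbb{P}^{x_0}_{u}\Bigl\{\sup_{0 \leq k < T_d} B(x(k)) \geq 1 \;\Big|\; x(0) = x_0\Bigr\} \;\leq\; B(x_0) + c T_d.
\]
Since $x_0 \in X_a$, condition~\eqref{ine_1} gives $B(x_0) \leq \gamma$, so chaining the two inequalities produces the claimed bound $\gamma + c T_d$.

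I do not anticipate a significant obstacle: the whole argument is essentially a packaging of Lemma~\ref{CBC_lem1}, with conditions~\eqref{ine_1}-\eqref{ine_2} used only to translate the reachability event into a level-crossing event for $B$ and to bound the initial value $B(x_0)$. The one point worth being careful about is making sure that the policy $u(\cdot)$ used in Lemma~\ref{CBC_lem1} is the same stationary policy selected from the barrier condition (per Remark~\ref{control_policy}), so that the supermartingale-type inequality~\eqref{cmart} or~\eqref{cmart1} actually holds along the closed-loop trajectory and Lemma~\ref{CBC_lem1} is applicable.
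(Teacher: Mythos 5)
Your argument is correct and coincides with the paper's own proof: both translate the reachability event into the level-crossing event $\{\sup_{0\leq k<T_d}B(x(k))\geq 1\}$ via condition~\eqref{ine_2}, apply Lemma~\ref{CBC_lem1} with $\lambda=1$ under the stationary policy from Remark~\ref{control_policy}, and then bound $B(x_0)\leq\gamma$ using condition~\eqref{ine_1}. No gaps.
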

\begin{proof}
Since $B(x(k))$ is a control barrier certificate, we conclude that \eqref{eq_CBC_lem1} in Lemma~\ref{CBC_lem1} holds. Now using \eqref{ine_1} and the fact that $X_b\subseteq \{x\in X \mid B(x) \geq 1 \}$, we have $\mathbb{P}_{u}^{x_0}\{x(k)\in X_1 \text{ for some } 0\leq k< T_d\mid x(0)=x_0\}$ $\leq\mathbb{P}_{u}^{x_0}\{ \sup_{0\leq k< T_d} B(x(k))\geq 1\mid x(0)=x_0 \}$ $\leq B(x_0)+cT_d$ $\leq \gamma+cT_d$, which concludes the proof.
\end{proof}

Theorem~\ref{barrier} enables us to formulate an optimization problem for finding a sound solution of the policy synthesis problem \ref{SCS_prob} with reachability specifications. We can minimize the values of $\gamma$ and $c$ in order to find an upper bound for finite-horizon reachability that is as tight as possible. 
{\begin{remark}
		If one succeeds in finding a control barrier certificate $B(\cdot)$ with $c=0$ satisfying conditions of Theorem~\ref{barrier}, the result of the theorem holds for an unbounded time horizon. However, considering relaxed supermartingale condition as discussed in Remark \ref{key1}, makes it easier to find $B(\cdot)$ satisfying conditions in Theorem \ref{barrier} and makes out results applicable to larger classes of systems.  
\end{remark}}


\section{Decomposition into Sequential Reachability}
\label{runs}
In this section, we discuss how to translate the synthesis problem~\ref{SCS_prob} for any LTL$_F$ specification into a sequence of simple reachability tasks that can be solvable by computing control barrier certificates as discussed in Theorem~\ref{barrier}.
Consider a DFA $\mathcal{A}_{\neg \varphi}=(Q,Q_0,\Pi,\delta,F)$ that accepts all finite words of length $n\in[0,N]\subset\N_0$ satisfying $\neg \varphi$.

\noindent\textbf{Accepting state run of $\mathcal{A}_{\neg \varphi}$.} For any $n\in\N_0$, sequence $\textbf{q}=(q_0,q_1,\ldots,q_n)\in Q^{n+1}$ is called an accepting state run if $q_0\in Q_0$, $q_n\in F$, and there exist a finite word $\sigma = (\sigma_0,\sigma_1,\ldots,\sigma_{n-1})\in\Pi^n$ such that $q_i \overset{\sigma_i}{\longrightarrow} q_{i+1}$ for all $i\in\{0,1,\ldots, n-1\}$.
We denote the set of such finite words by $\sigma(\textbf{q})\subseteq \Pi^n$ and the set of accepting state runs by $\mathcal R$. We also indicate the length of $\textbf{q}\in Q^{n+1}$ by $|\textbf{q}|$, which is $n+1$.

Self-loops in the DFA play a central role in our decomposition.
Let $Q_s\subseteq Q$ be a set of states of $\mathcal{A}_{\neg \varphi}$ having self-loops, \ie, $Q_s:= \{q\in Q \,|\, \exists p\in \Pi, q\overset{p}{\longrightarrow} q\}$.
Let $\mathcal{R}_{N}$ be the set of all finite accepting state runs of lengths less than or equal to $N+1$ excluding self-loops,
\begin{equation}
\label{eq:runs}
\mathcal{R}_{N} := \{\textbf{q} = (q_0,q_1,\ldots,q_n)\in\mathcal R \,|\, n\le N,\, q_i\neq q_{i+1},\, \forall i<n\}.
\end{equation}
Computation of $\mathcal{R}_{N}$ can be done efficiently using algorithms in graph theory by viewing $\mathcal{A}_{\neg \varphi}$ as a directed graph.
Consider $\mathcal{G}=(\mathcal{V},\mathcal{E})$ as a directed graph with vertices $\mathcal{V}=Q$ and edges $\mathcal{E}\subseteq\mathcal{V}\times\mathcal{V}$ such that $(q,q')\in\mathcal{E}$ if and only if $q'\neq q$ and there exist $p\in\Pi$ such that $q\overset{p}{\longrightarrow} q'$. For any $(q,q')\in\mathcal{E}$, we denote the atomic proposition associated with the edge $(q,q')$ by $\sigma(q,q')$. From the construction of the graph, it is obvious that the finite path in the graph of length $n+1$ starting from vertices $q_0\in Q_0$ and ending at $q_F\in F$ is an accepting state run $\textbf{q}$ of $\mathcal{A}_{\neg \varphi}$ without any self-loop thus belongs to $\mathcal{R}_{N}$.
Then one can easily compute $\mathcal{R}_{N}$ using variants of depth first search algorithm \cite{russell2003artificial}. For each $p\in\Pi$, we define a set  $\mathcal{R}^p_{N}$ as
\begin{equation}\label{eq:run1}
\mathcal{R}^p_{N}:=\{\textbf{q}=(q_0,q_1,\ldots,q_n)\in\mathcal{R}_{N}\mid \sigma(q_0,q_1)=p\in\Pi \}.
\end{equation}
Note that we use the superscript $p\in\Pi$ to represent the atomic proposition corresponding to the initial region from which the state evolution starts. We use a similar notation throughout the paper.\\
Decomposition into sequential reachability is performed as follows.
For any $\textbf{q} = (q_0,q_1,\ldots,q_n)\in\mathcal{R}^p_{N}$, we define $\mathcal{P}^p(\textbf{q})$ as a set of all state runs of length $3$ augmented with a horizon,
\begin{equation}
\label{eq:reachability}
\mathcal{P}^p(\textbf{q}):=\{\left(q_i,q_{i+1},q_{i+2},T(\textbf{q},q_{i+1})\right)\mid 0\leq i\leq n-2\},
\end{equation}
where the horizon is defined as $T(\textbf{q},q_{i+1}) = N+2-|\textbf{q}|$ for $q_{i+1}\in Q_s$ and $1$ otherwise. Note that the state runs of length $3$ in \eqref{eq:reachability} corresponds to two atomic propositions associated with respective edges which will later serve as regions $X_a$ and $X_b$ and the term $T(\textbf{q},q_{i+1})$ in \eqref{eq:reachability} will serve as $T_d$ in Theorem \ref{barrier}. We denote $\mathcal{P}(\mathcal{A}_{\neg \varphi})=\bigcup_{p\in\Pi}\bigcup_{\textbf{q}\in\mathcal{R}^p_{N}}\mathcal{P}^p(\textbf{q})$.
\begin{remark}
Note that $\mathcal{P}^p(\textbf{q})=\emptyset$ for $|\textbf{q}|=2$. In fact, any accepting state run of length $2$ specifies a subset of the state set such that the system satisfies $\neg\varphi$ whenever it starts from that subset. This gives trivial zero probability for satisfying the specification, thus neglected in the sequel. 
\end{remark}
The computation of sets $\mathcal{P}^p(\textbf{q})$, $\textbf{q}\in \mathcal R^p_{N}$, $p\in\Pi$, is illustrated in Algorithm~\ref{algo1} and demonstrated below for our running example.


\begin{algorithm}[t]
	\caption{Computation of sets $\mathcal{P}^p(\textbf{q})$, $\textbf{q}\in\mathcal{R}^p_{N}$, $p\in\Pi$}
	\label{algo1}
	\begin{algorithmic}[1]
		\Require{$ \mathcal{G} $, $ Q_s $, $N$, $\Pi$}
		\Initialize{$\mathcal{P}^p(\textbf{q})\leftarrow \emptyset,\quad\forall p\in\Pi$}
		\State Compute set $\mathcal{R}_{N}$ by depth first search on $ \mathcal{G} $
		\ForAll{$ \textbf{q}=(q_0,q_1,\ldots,q_n)\in \mathcal{R}_{N} $ and $p\in\Pi$}
		\If {$p=\sigma(q_0,q_1)$}
		\State{$\mathcal{R}^p_{N}\leftarrow\{\textbf{q}\}$}
		\EndIf
		\EndFor
		\ForAll{$p\in\Pi$ and $ \textbf{q}\in \mathcal{R}^p_{N} $ and $|\textbf{q}|\geq3$}
		\For {$i=0$ to $|\textbf{q}|-3$}
		\State $\mathcal{P}_{temp}(\textbf{q})\leftarrow \{(q_i,q_{i+1},q_{i+2})\}$
		\If {$q_{i+1}\in Q_s$}
		\State $\mathcal{P}^p(\textbf{q})\leftarrow \{(q_i,q_{i+1},q_{i+2},N+2-|\textbf{q}|)\}$
		\Else
		\State $\mathcal{P}^p(\textbf{q})\leftarrow \{(q_i,q_{i+1},q_{i+2},1)\}$
		\EndIf
		\EndFor
		\EndFor
		\Return $\mathcal{P}^p(\textbf{q}),\quad\forall p\in\Pi$
	\end{algorithmic}
\end{algorithm}


\def\example{\par\noindent{\bf Example 1.} \ignorespaces}
\def\endexaple{}

\begin{example}(continued)
For LTL$_F$ formula $\varphi$ given in \eqref{LTL_f}, Figure 1(b) shows a DFA $\mathcal{A}_{\neg\varphi}$ that accepts all words that satisfy $\neg\varphi$. From Figure 1(b), we get $Q_0=\{q_0\}$, $\Pi=\{p_0,p_1,p_2,p_3\}$ and $F=\{q_3\}$. We consider traces of maximum length $N=5$.
The set of accepting state runs of lengths at most $N+1$ without self-loops is
 \begin{equation*}
 \mathcal{R}_{5}=\{(q_0,q_4,q_3),(q_0,q_1,q_2,q_3),(q_0,q_1,q_4,q_3),(q_0,q_3)\}.
 \end{equation*}
The sets $\mathcal{R}^p_{5}$ for $p\in\Pi$ are as follows:
\begin{align*}
&\mathcal{R}^{p_0}_{5}=\{(q_0,q_1,q_2,q_3),(q_0,q_1,q_4,q_3)\},\quad\mathcal{R}^{p_1}_{5}=\{(q_0,q_3)\}, \quad \mathcal{R}^{p_2}_{5}=\{(q_0,q_4,q_3)\},\quad \mathcal{R}^{p_3}_{5}=\{(q_0,q_3)\}.
\end{align*} 
The set of states with self-loops is $Q_s = \{q_1,q_2,q_4\}$. Then the sets $\mathcal{P}^p(\textbf{q})$ for $\textbf{q}\in\mathcal{R}^p_{5}$ are as follows:
\begin{align*}
&\mathcal{P}^{p_0}(q_0,q_1,q_2,q_3)=\{(q_0,q_1,q_2,3),(q_1,q_2,q_3,3)\},\\
&\mathcal{P}^{p_0}(q_0,q_1,q_4,q_3)=\{(q_0,q_1,q_4,3),(q_1,q_4,q_3,3)\},\\
&\mathcal{P}^{p_1}(q_0,q_3)=\mathcal{P}^{p_3}(q_0,q_3)=\emptyset,\quad \mathcal{P}^{p_2}(q_0,q_4,q_3)=\{(q_0,q_4,q_3,4)\}.
\end{align*} 
For every $\textbf{q}\in\mathcal{R}^p_{5}$, the corresponding finite words $\sigma(\textbf{q})$ are listed as follows:
\begin{align*}
& \sigma(q_0,q_3)=\{p_1\},\quad \sigma(q_0,q_4,q_3)=\{(p_2,p_1)\},\\
& \sigma(q_0,q_1,q_2,q_3)=\{(p_0,p_1,p_2)\},\quad \sigma(q_0,q_1,q_4,q_3)=\{(p_0,p_2,p_1)\}.
\end{align*}
\end{example}


\section{Controller Synthesis using Control Barrier Certificates}
\label{BC_computation}
Having $\mathcal{P}^p(\textbf{q})$ defined in \eqref{eq:reachability} as the set of state runs of length $3$ augmented with a horizon, in this section, we provide a systematic approach to compute a policy with a (potentially tight) lower bound on the probability that the state evolutions of $\mathfrak S$ satisfies $\varphi$. Given DFA $\mathcal{A}_{\neg\varphi}$, our approach relies on performing a reachability computation over each element of $\mathcal P(\mathcal{A}_{\neg\varphi})$ {(\ie, $\bigcup_{p\in\Pi}\bigcup_{\textbf{q}\in\mathcal{R}^p_{N}}\mathcal{P}^p(\textbf{q})$)}, where reachability probability is upper bounded using control barrier certificates along with appropriate choices of control inputs as mentioned in Theorem~\ref{barrier}. However, computation of control barrier certificates and the policies for each element $\nu\in\mathcal P(\mathcal{A}_{\neg\varphi})$, can cause ambiguity while utilizing controllers in closed-loop whenever there are more than one outgoing edges from a state of the automaton. To make it more clear, consider elements $\nu_1=(q_0,q_1,q_2, T((q_0,q_1,q_2,q_3),q_1))$ and $\nu_2=(q_0,q_1,q_4, T((q_0,q_1,q_4,q_3),q_1))$ from Example 1, where there are two outgoing transitions from state $q_1$ (see Figure 1(b)). This results in two different reachability problems, namely, reaching sets $L^{-1}(\sigma(q_1,q_2))$ and $L^{-1}(\sigma(q_1,q_4))$ starting from the same set $L^{-1}(\sigma(q_0,q_1))$. Thus computing different control barrier certificates and corresponding controllers in such a scenario is not helpful. To resolve this ambiguity, we simply merge such reachability problems into one reachability problem by replacing the reachable set $X_b$ in Theorem~\ref{barrier} with the union of regions corresponding to the alphabets of all outgoing edges. Thus we get a common control barrier certificate and a corresponding controller. This enables us to partition $\mathcal P(\mathcal{A}_{\neg\varphi})$ and put the elements sharing a common control barrier certificate and a corresponding control policy in the same partition set. These sets can be formally defined as
$$\mu_{(q,q',\Delta(q'))}:=\{(q,q',q'',T)\in\mathcal P(\mathcal{A}_{\neg\varphi})\mid q,q',q''\in Q\text{ and }q''\in\Delta(q')\}.$$
The control barrier certificate and the control policy corresponding to the partition set $\mu_{(q,q',\Delta(q'))}$ are denoted by $B_{\mu_{(q,q',\Delta(q'))}}(x)$ and $u_{\mu_{(q,q',\Delta(q'))}}(x)$, respectively. Thus, for all $\nu\in\mathcal P(\mathcal{A}_{\neg\varphi})$, we have 
\begin{equation}\label{eq_SCH_controller}
B_\nu(x)=B_{\mu_{(q,q',\Delta(q'))}}(x)\text{ and } u_\nu(x)=u_{\mu_{(q,q',\Delta(q'))}}(x),\quad \text{if }\nu\in\mu_{(q,q',\Delta(q'))}.
\end{equation}

 \subsection{Control policy}\label{aaaaa1}
From the above discussion, one can readily observe that we have different control policies at different locations of the automaton which can be interpreted as a switching control policy.
Next, we define the automaton representing the switching mechanism for control policies. Consider the DFA $\mathcal{A}_{\neg \varphi}=(Q,Q_0,\Pi,\delta,F)$ corresponding to $\neg\varphi$ as discussed in Section~\ref{runs}, {where $\Delta(q)$ denotes the set of all successor states of $q\in Q$.} Now, the switching mechanism is given by a DFA $\mathcal{A}_{\mathfrak m}=(Q_{\mathfrak m},Q_{\mathfrak m 0},\Pi_{\mathfrak m},\delta_{\mathfrak m},F_{\mathfrak m})$, where $Q_{\mathfrak m}:=Q_{\mathfrak m 0}\cup\{(q,q',\Delta(q'))\mid q,q'\in Q\setminus F\}\cup F_{\mathfrak m}$ is the set of states, $Q_{\mathfrak m 0}:=\{(q_0,\Delta(q_0))\mid q_0\in Q_0\}$ is a set of initial states, $\Pi_{\mathfrak m}=\Pi$, $F_{\mathfrak m}=F$, and the transition relation $(q_{\mathfrak m},\sigma,q_{\mathfrak m}')\in \delta_{\mathfrak m}$ is defined as
\begin{itemize}
\item for all $q_{\mathfrak m}=(q_0,\Delta(q_0))\in Q_{\mathfrak m 0}$, 
\begin{itemize}
\item $(q_0,\Delta(q_0))\overset{\sigma(q_0,q'')}{\longrightarrow}(q_0,q'',\Delta(q''))$, where $q_0\!\!\overset{\sigma(q_0,q'')}{\longrightarrow}q''$;
\end{itemize}
\item for all $q_{\mathfrak m}=(q,q',\Delta(q'))\in Q_{\mathfrak m}\setminus (Q_{\mathfrak m 0}\cup F_{\mathfrak m})$,
\begin{itemize}
 \item $(q,q',\Delta(q'))\overset{\sigma(q',q'')} {\longrightarrow}(q',q'',\Delta(q''))$, such that $q,q',q''\in Q$, $q''\in \Delta(q')$ and $q''\notin F$; and
 \item $(q,q',\Delta(q'))\overset{\sigma(q',q'')} {\longrightarrow} q''$, such that $q,q',q''\in Q$, $q''\in \Delta(q')$ and $q''\in F$.
 \end{itemize}
\end{itemize}
The control policy that is a candidate for solving Problem~\ref{SCS_prob} is given as 
\begin{equation}\label{eq:policy}
\rho(x,q_{\mathfrak m})=u_{\mu_{(q_{\mathfrak m}')}}(x), \quad \forall (q_{\mathfrak m},L(x),q_{\mathfrak m}')\in\delta_{\mathfrak m}.
\end{equation}
In the next subsection, we discuss the computation of bound on the probability of satisfying the specification under such a policy, which then can be used for checking if this policy is indeed a solution for Problem~\ref{SCS_prob}.
\begin{remark}
The control policy in \eqref{eq:policy} is a Markov policy on the augmented space $X\times Q_{\mathfrak m}$. Such a policy is equivalent to a history dependent policy on the state set $X$ of the system as discussed in Subsection~\ref{Sec_dtSCS} (see \cite{tkachev2013quantitative} for a proof).  
\end{remark}
\begin{example}(continued)
The DFA $\mathcal{A}_{\mathfrak m}=(Q_{\mathfrak m},Q_{\mathfrak m 0},\Pi_{\mathfrak m},\delta_{\mathfrak m},F_{\mathfrak m})$ modeling the switching mechanism between policies for the system in Example 1 is shown in Figure~\ref{SCH_swithing}. 
\begin{figure}[h] 
			\centering
			\includegraphics[scale=0.6, height = 5.5cm]{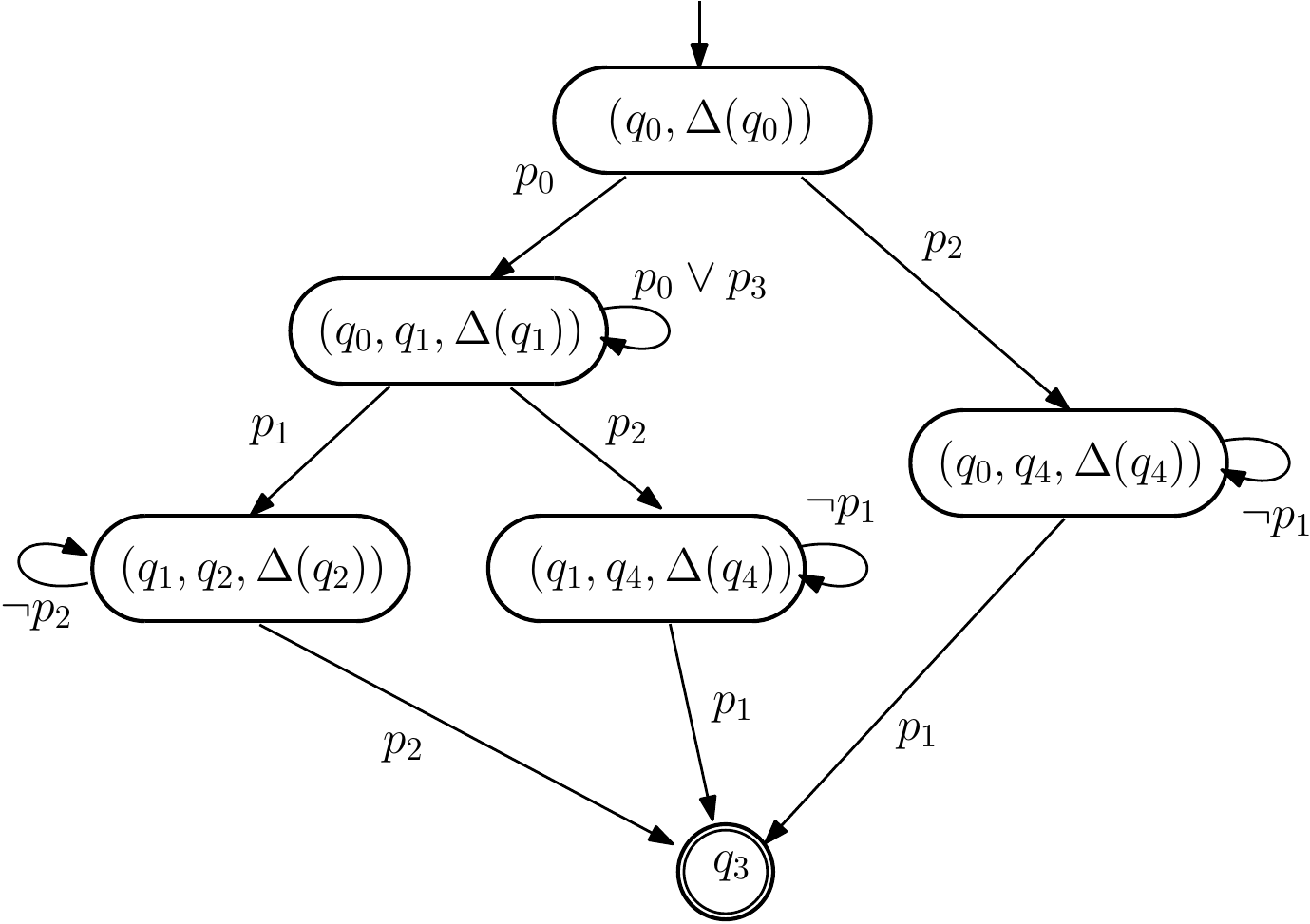} 
			\caption{DFA $\mathcal{A}_{\mathfrak m}$ representing switching mechanism for controllers for Example 1.}
			\label{SCH_swithing}
\end{figure}
 \qed
\end{example}



\subsection{Computation of probabilities}\label{aaaaa2}
Next theorem provides an upper bound on the probability that the state evolution of the system satisfies the specification $\neg\varphi$.


\begin{theorem}\label{upper_bound}
For a given LTL$_F$ specification $\varphi$, let $\mathcal{A}_{\neg\varphi}$ be a DFA corresponding to its negation. For $p\in\Pi$,  let $\mathcal R^p_{N}$ be the set defined in \eqref{eq:run1}, and $\mathcal{P}^p$ be the set of runs of length $3$ augmented with a horizon defined in \eqref{eq:reachability}.  The probability that the state evolution of $\mathfrak S$ starting from any initial state $x_0\in L^{-1}(p)$ under the control policy in \eqref{eq:policy} 
satisfies $\neg\varphi$ within time horizon $[0,N)\subseteq\mathbb{N}_0$ is upper bounded by
\begin{equation}
\label{eq:bound}
\mathbb{P}^{x_0}_\rho\{L(\textbf{x}_N) \models\neg \varphi\}\leq \sum_{\textbf{q} \in\mathcal R^p_{N}}\prod\left\{(\gamma_\nu+c_\nu T)\,|\, \nu = (q,q',q'',T)\in \mathcal{P}^p(\textbf{q})\right\}, 
\end{equation}
where $\gamma_\nu+c_\nu T$ is computed via Theorem~\ref{barrier} which is the upper bound on the probability of the trajectories of $\mathfrak S$ starting from $X_a := L^{-1}(\sigma(q,q'))$ and reaching $ X_b:=L^{-1}(\sigma(q',q''))$ within time horizon $[0,T)\subseteq\mathbb{N}_0$.
\end{theorem}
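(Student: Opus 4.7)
The plan is to decompose the event $\{L(\mathbf{x}_N)\models\neg\varphi\}$ using the DFA structure of $\mathcal{A}_{\neg\varphi}$, apply a union bound across the relevant loop-free accepting runs, and then factor each run's probability into a product of reachability terms to which Theorem~\ref{barrier} applies.

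First, because $\mathcal{L}(\neg\varphi)=\mathcal{L}(\mathcal{A}_{\neg\varphi})$, the trace $L(\mathbf{x}_N)$ satisfies $\neg\varphi$ precisely when the induced DFA run reaches an accepting state within $N$ transitions. Collapsing the self-loops in any such accepting run produces a loop-free accepting run in $\mathcal{R}_N$, and since $x_0\in L^{-1}(p)$ the first edge is labeled by $\sigma(q_0,q_1)=p$, so the collapsed run lies in $\mathcal{R}^p_N$. I would therefore write
\[
\{L(\mathbf{x}_N)\models\neg\varphi\} \subseteq \bigcup_{\mathbf{q}\in\mathcal{R}^p_N} E_{\mathbf{q}},
\]
where $E_{\mathbf{q}}$ is the event that, within $N$ steps, the state evolution visits the regions $L^{-1}(\sigma(q_0,q_1)),L^{-1}(\sigma(q_1,q_2)),\ldots,L^{-1}(\sigma(q_{n-1},q_n))$ in order, with arbitrary (possibly empty) sojourns at any self-loop states along the way. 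The union bound then yields $\mathbb{P}^{x_0}_\rho\{L(\mathbf{x}_N)\models\neg\varphi\}\leq\sum_{\mathbf{q}\in\mathcal{R}^p_N}\mathbb{P}^{x_0}_\rho(E_{\mathbf{q}})$.

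Next, for each run $\mathbf{q}=(q_0,\ldots,q_n)$ with $n\geq 2$, I would factor $E_{\mathbf{q}}$ into a sequence of reachability tasks indexed by the triples $\nu=(q_i,q_{i+1},q_{i+2},T(\mathbf{q},q_{i+1}))\in\mathcal{P}^p(\mathbf{q})$. The horizon $T(\mathbf{q},q_{i+1})$ equals $1$ when $q_{i+1}\notin Q_s$ (the DFA is forced to leave $q_{i+1}$ in one step) and $N+2-|\mathbf{q}|$ when $q_{i+1}\in Q_s$ (the residual time budget after reserving one step per non-self-loop transition). Under the hybrid policy \eqref{eq:policy}, the closed-loop system is Markov on the augmented state--automaton space, and the stationary policy $u_\nu$ associated with the barrier certificate $B_\nu$ is in force throughout the segment starting in $X_a=L^{-1}(\sigma(q_i,q_{i+1}))$ and aiming at $X_b=L^{-1}(\sigma(q_{i+1},q_{i+2}))$. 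Applying Theorem~\ref{barrier} on each such segment bounds its conditional success probability by $\gamma_\nu+c_\nu T$, and the tower property applied successively over segments gives
\[
\mathbb{P}^{x_0}_\rho(E_{\mathbf{q}}) \leq \prod\{(\gamma_\nu+c_\nu T)\mid \nu\in\mathcal{P}^p(\mathbf{q})\}.
\]
Substituting this into the union bound delivers \eqref{eq:bound}.

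The main obstacle I foresee lies in the bookkeeping around self-loops and horizons. Setting $T(\mathbf{q},q_{i+1})=N+2-|\mathbf{q}|$ effectively grants each self-loop state the entire residual budget, so the product on the right-hand side of \eqref{eq:bound} ignores the coupling that the total sojourn time must fit within $N$; one must confirm that this overapproximation remains a valid upper bound on $\mathbb{P}(E_{\mathbf{q}})$, which holds because Theorem~\ref{barrier} is monotone in the horizon and the worst-case horizon available to any single segment of a run of length $|\mathbf{q}|$ is indeed $N+2-|\mathbf{q}|$. A secondary subtlety is that the policy changes as the DFA state advances: because each switch is triggered exactly when the trajectory enters the region that forces the next DFA transition, the Markov property on the augmented space survives each crossing under the freshly activated stationary policy, making the product decomposition rigorous.
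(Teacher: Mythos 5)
Your proposal is correct and follows essentially the same route as the paper's proof: a sum over the loop-free accepting runs in $\mathcal{R}^p_N$, a product over the sequential reachability segments in $\mathcal{P}^p(\textbf{q})$ bounded via Theorem~\ref{barrier}, and the observation that the horizon $T(\textbf{q},q_{i+1})=N+2-|\textbf{q}|$ covers the worst-case sojourn at any self-loop state. In fact, your explicit union-bound and tower-property justifications, together with the monotonicity-in-horizon remark, spell out steps that the paper's proof only states informally.
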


\begin{proof}
For $p\in\Pi$, consider an accepting run $\textbf{q}\in \mathcal{R}^p_{N}$ and set $\mathcal{P}^p(\textbf{q})$ as defined in \eqref{eq:reachability}. We apply Theorem~\ref{barrier} to any $\nu=(q,q',q'',T)\in \mathcal{P}^p(\textbf{q})$. The probability that the state evolution of $\mathfrak S$ starts from any initial state $x_0\in L^{-1}(\sigma(q,q'))$ and reaches $L^{-1}(\sigma(q',q''))$ under control input $u_\nu(x)$ within time horizon $[0,T]\subseteq \N_0$ is upper bounded by $\gamma_\nu+c_\nu T$. Now the upper bound on the probability of the trace of the state evolution (\ie, $L(\textbf{x}_N)$) reaching accepting state following trace corresponding to $\textbf{q}$ is given by the product of the probability bounds corresponding to all elements $\nu=(q,q',q'',T)\in \mathcal{P}^p(\textbf{q})$ and is given by
  \begin{equation}
\label{eq:bound_1}
\mathbb{P}\{ \sigma(\textbf{q})\models \neg \varphi \} \leq \prod \left\{(\gamma_\nu+c_\nu T)\,|\, \nu = (q,q',q'',T)\in \mathcal{P}^p(\textbf{q})\right\}.
\end{equation}
Note that, the way we computed time horizon $T$, we always get the upper bound for the probabilities for all possible combinations of self-loops for accepting state runs of length less than or equal to $N+1$.
The upper bound on the probability that the state evolution of the system $\mathfrak S$ starting from any initial state $x_0\in L^{-1}(p)$ violating $\varphi$ can be computed by summing the probability bounds for all possible accepting runs as computed in \eqref{eq:bound_1} and is given by
\begin{equation}
\label{eq:bound_2}
\mathbb{P}^{x_0}_\rho\{L(\textbf{x}_N) \models \neg \varphi\} \leq \sum_{\textbf{q} \in\mathcal R^p_{N}}\prod\left\{(\gamma_\nu+c_\nu T)\,|\, \nu = (q,q',q'',T)\in \mathcal{P}^p(\textbf{q})\right\}.\nonumber
\end{equation} 
\end{proof}
Theorem~\ref{upper_bound} enables us to decompose the computation into a collection of sequential reachability, compute bounds on the reachability probabilities using Theorem~\ref{barrier}, and then combine the bounds in a sum-product expression.
{Note that the upper bound provided in \eqref{eq:bound} could be replaced by $\min\big\{1,\sum_{\textbf{q} \in\mathcal R^p_{N}}\prod\{(\gamma_\nu+c_\nu T)\,|\, \nu \!=\! (q,q',q'',T)\!\in\! \mathcal{P}^p(\textbf{q})\}\big\}$,
	to prevent it from being greater than one. This bound is useful only if it is less than one.}
\begin{remark}
In case we are unable to find control barrier certificates for some of the elements $\nu \in \mathcal P^p(\textbf q)$ in \eqref{eq:bound}, we replace the related term $(\gamma_\nu+c_\nu T)$ by the pessimistic bound $1$. In order to get a non-trivial bound in \eqref{eq:bound}, at least one control barrier certificate must be found for each $\textbf{q} \in\mathcal R^p_{N}$.
\end{remark}
\begin{corollary}
\label{lower_bound}
Given the result of Theorem~\ref{upper_bound}, the probability that the trajectories of $\mathfrak S$ of length $N$  starting from any $x_0\in L^{-1}(p)$ satisfies LTL$_F$ specification $\varphi$ is lower-bounded by
$$\mathbb{P}^{x_0}_\rho\{L(\textbf{x}_N) \models \varphi\}\geq1-\mathbb{P}^{x_0}_\rho\{L(\textbf{x}_N) \models \neg \varphi\}.$$
\end{corollary}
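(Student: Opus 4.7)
The plan is essentially a one-step argument based on complementarity of events, together with the bound already established in Theorem~\ref{upper_bound}. First I would invoke the semantics of LTL$_F$ recalled in Subsection~\ref{Sec_LTL}: for every finite trace $\sigma$ and every formula $\varphi$, exactly one of $\sigma\models\varphi$ or $\sigma\models\neg\varphi$ holds (this is the clause $\sigma,i\models\neg\varphi$ iff $\sigma,i\not\models\varphi$, specialized to $i=0$). Hence the two events $\{L(\textbf{x}_N)\models\varphi\}$ and $\{L(\textbf{x}_N)\models\neg\varphi\}$ are complementary subsets of the sample space of trajectories generated by $\mathfrak S$ under policy $\rho$ from $x_0$.

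Next I would confirm that both events are measurable so that their probabilities are well defined: the labeling function $L$ is measurable by assumption, the DFA $\mathcal{A}_{\neg\varphi}$ has finitely many runs of length $N+1$, and each such run corresponds to a measurable cylinder in $X^N$. Consequently,
\begin{equation*}
\mathbb{P}^{x_0}_\rho\{L(\textbf{x}_N)\models\varphi\}+\mathbb{P}^{x_0}_\rho\{L(\textbf{x}_N)\models\neg\varphi\}=1,
\end{equation*}
which rearranges to
\begin{equation*}
\mathbb{P}^{x_0}_\rho\{L(\textbf{x}_N)\models\varphi\}=1-\mathbb{P}^{x_0}_\rho\{L(\textbf{x}_N)\models\neg\varphi\}.
\end{equation*}

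Finally, I would apply the upper bound on $\mathbb{P}^{x_0}_\rho\{L(\textbf{x}_N)\models\neg\varphi\}$ provided by Theorem~\ref{upper_bound} to the right-hand side: subtracting an upper bound yields a lower bound, giving exactly the claimed inequality. There is no real obstacle here; the only subtlety is the measurability check, and the content of the corollary lies entirely in the fact that Theorem~\ref{upper_bound}, which was phrased in terms of the negated formula, can be flipped into a guarantee for the original specification via this elementary complementation step.
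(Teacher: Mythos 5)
Your argument is correct and matches the paper's (implicitly omitted) reasoning: the corollary is just the complementation step, since under the LTL$_F$ semantics $\sigma\models\neg\varphi$ iff $\sigma\not\models\varphi$, so the two events partition the trajectory space and the inequality (in fact equality) follows at once, with Theorem~\ref{upper_bound} then supplying the computable upper bound on the negated-formula probability. Your additional measurability remark is a harmless refinement of what the paper leaves tacit.
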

\subsection{Computation of control barrier certificate}
\label{compute_CBC}
Proving the existence of a control barrier certificate and finding one are in general hard problems.
But if we restrict the class of systems and labeling functions, we can construct computationally efficient techniques to search for control barrier certificates and corresponding control policies of specific forms. In this subsection, we provide two possible approaches for computing control barrier certificates and corresponding control policies for a \Sys $\mathfrak S$ with respectively continuous and discrete input sets.
\subsubsection{Continuous input sets}\label{SOSS}
We propose a technique using sum-of-squares (SOS) optimization \cite{parrilo2003semidefinite}, relying on the fact that a polynomial is non-negative if it can be written as a sum of squares of different polynomials. In order to utilize an SOS optimization, we raise the following assumption.


\begin{assumption}
\label{ass:BC}
System $\mathfrak S$ has a continuous state set $X\subseteq \mathbb R^n$ and a continuous input set $U\subseteq \R^m$. Its vector field $f:X\times U\times V_{\textsf{w}}\rightarrow X$ is a polynomial function of state $x$ and input $u$ for any $w\in V_{\textsf{w}}$.
Partition sets $X_i = L^{-1}(p_i)$, $i\in\{0,1,2,\ldots, M\}$, are bounded semi-algebraic sets, \ie, they can be represented by polynomial equalities and inequalities.
\end{assumption}

Under Assumption~\ref{ass:BC}, we can formulate conditions in Theorem~\ref{barrier} as an SOS optimization to search for a polynomial control barrier certificate $B(\cdot)$, a polynomial control policy $u(\cdot)$ and a upper bound $(\gamma+c T_d)$. The following lemma provides a set of sufficient conditions for the existence of such control barrier certificate required in Theorem~\ref{barrier}, which can be solved as an SOS optimization.


\begin{lemma}
	\label{sos}
	Suppose Assumption~\ref{ass:BC} holds and sets $X_a,X_b, X$ can be defined by vectors of polynomial inequalities $X_a=\{x\in\R^n\mid g_0(x)\geq0\}$, $X_b=\{x\in\R^n\mid g_1(x)\geq0\}$, and $X=\{x\in\R^n\mid g(x)\geq0\}$, where the inequalities are defined element-wise.
	Suppose there exists a sum-of-square polynomial $B(x)$, constants $\gamma\in [0,1)$ and $c\ge 0$, polynomials $\lambda_{u_i}(x)$ corresponding to the $i^{\text{th}}$ input in $u=(u_1,u_2,\ldots,u_m)\in U\subseteq \R^m$, and vectors of sum-of-squares polynomials $\lambda_0(x)$, $\lambda_1(x)$, and {$\lambda_x(x,u)$} of appropriate size such that following expressions are sum-of-squares polynomials
	\begin{align}
	&\hspace{-0.4em}-B(x)-\lambda_0^T(x) g_0(x)+\gamma\label{eq:sos1}\\
	&B(x)-\lambda_1^T(x) g_1(x)-1\label{eq:sos2}\\
	&\hspace{-0.4em}-\hspace{-0.2em}\mathbb{E}[B(\hspace{-0.1em}f\hspace{-0.1em}(x,\hspace{-0.1em}u,\hspace{-0.1em}w\hspace{-0.1em})\hspace{-0.1em})|x,\hspace{-0.1em}u]\hspace{-0.2em}+\hspace{-0.2em}B(\hspace{-0.1em}x\hspace{-0.1em})\hspace{-0.2em}-\hspace{-0.3em}\sum_{i=1}^{m}(u_i\hspace{-0.2em}-\hspace{-0.3em}\lambda_{u_i}\hspace{-0.2em}(\hspace{-0.1em}x\hspace{-0.1em})\hspace{-0.1em})\hspace{-0.2em}-\hspace{-0.3em}\lambda_x^T\hspace{-0.2em}(x,\hspace{-0.1em}u) g(\hspace{-0.1em}x\hspace{-0.1em})\hspace{-0.2em}+\hspace{-0.2em}c.\label{eq:sos3}
	\end{align}
	Then,  $B(x)$ satisfies conditions in Theorem~\ref{barrier} and any {$u_i\geq\lambda_{u_i}(x)$} is the corresponding control input.
\end{lemma}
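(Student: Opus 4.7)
The plan is to verify that the three sum-of-squares expressions \eqref{eq:sos1}--\eqref{eq:sos3} encode, via a Positivstellensatz/S-procedure style argument, exactly the three hypotheses \eqref{ine_1}, \eqref{ine_2}, and the control barrier inequality \eqref{cmart} needed to invoke Theorem~\ref{barrier}. Since $B(x)$ is itself required to be a sum-of-squares polynomial, it is automatically nonnegative on $X$, so $B:X\to\R_0^+$ as demanded by Definition~\ref{CBC_def}.

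Next I would handle \eqref{ine_1} and \eqref{ine_2}, which are essentially identical: expression \eqref{eq:sos1} being SOS means $\gamma - B(x) = \sigma_0(x) + \lambda_0^T(x) g_0(x)$ for some SOS polynomial $\sigma_0$; restricting to any $x\in X_a$, we have $g_0(x)\ge 0$ componentwise and the entries of $\lambda_0(x)$ are nonnegative (being SOS), so $\lambda_0^T(x)g_0(x)\ge 0$, which yields $B(x)\le \gamma$. An analogous argument applied to \eqref{eq:sos2} gives $B(x)\ge 1$ on $X_b$.

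The expected-decrease condition requires a slightly more careful reading of \eqref{eq:sos3}, which is a polynomial in both $x$ and $u$. Rewriting the SOS hypothesis gives
\begin{equation*}
B(x)+c-\mathbb{E}[B(f(x,u,w))\mid x,u] \;=\; \sigma(x,u)+\sum_{i=1}^m(u_i-\lambda_{u_i}(x))+\lambda_x^T(x,u) g(x),
\end{equation*}
for some SOS $\sigma(x,u)$, valid for all $(x,u)$. For any $x\in X$ we have $g(x)\ge 0$, and $\lambda_x(x,u)$ is SOS so $\lambda_x^T(x,u)g(x)\ge 0$; if we then select a control input with $u_i\ge\lambda_{u_i}(x)$ for each $i$, the middle sum is also nonnegative, so the right-hand side is $\ge 0$, giving $\mathbb{E}[B(f(x,u,w))\mid x,u]\le B(x)+c$. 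This exhibits the input required by Definition~\ref{CBC_def} and simultaneously furnishes a concrete (polynomial) state-feedback policy $u(x)$. Combining these three facts with Theorem~\ref{barrier} completes the proof.

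The main subtlety, rather than any difficult calculation, is the joint treatment of $x$ and $u$ in \eqref{eq:sos3}: the multipliers $\lambda_{u_i}(x)$ must be interpreted both as polynomial certificates used in the S-procedure and as the feedback law that renders the slack terms nonnegative, so some care is needed to ensure the chosen $u$ actually lies in $U$ in applications; but algebraically the implication is immediate from the SOS relaxation.
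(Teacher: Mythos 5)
Your proof is correct and follows essentially the same S-procedure argument as the paper: use the SOS multipliers to deduce $B(x)\le\gamma$ on $X_a$, $B(x)\ge 1$ on $X_b$, and the expected-decrease condition on $X$, then invoke Theorem~\ref{barrier}. The only (harmless) difference is that you verify the decrease condition for any $u_i\ge\lambda_{u_i}(x)$, which in fact matches the lemma's statement more literally than the paper's own proof, which simply sets $u_i=\lambda_{u_i}(x)$.
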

 \begin{proof}
 Since the entries $B(x)$ and $\lambda_0(x)$ in $-B(x)-\lambda_0^T(x) g_0(x)+\gamma$ are sum-of-squares, we have $0\leq B(x)+\lambda_0^T(x) g_0(x)\leq \gamma$. Since the term $\lambda_0^T(x) g_0(x)$ is non-negative over $X_a$, \eqref{eq:sos1} implies condition \eqref{ine_1} in Theorem~\ref{barrier}. Similarly, we can show that \eqref{eq:sos2} implies condition \eqref{ine_2} in Theorem~\ref{barrier}. Now consider \eqref{eq:sos3}. If we choose control input $u_i=\lambda_{u_i}(x)$ and since the term $\lambda^T(x) g(x)$ is non-negative over set $X$, we have $\mathbb{E}[B(f(x,u,w))|x,u]\leq B(x)+c$ which implies that the function $B(x)$ is a control barrier certificate. This concludes the proof.   
 \end{proof}
 \begin{remark}
 Assumption~\ref{ass:BC} is essential for applying the results of Lemma~\ref{sos} to \emph{any} LTL$_F$ specification. For a given specification, we can relax this assumption and allow some of the partition sets $X_i$ to be unbounded. For this, we require that the labels corresponding to unbounded partition sets should only appear either on self-loops or on accepting runs of length less than 3. For instance, Example~1 has an unbounded partition set $X_3$ and its corresponding label $p_3$ satisfies this requirement (see Figure 1), thus the results are still applicable.
 \end{remark}

 Based on Lemma~\ref{sos}, for any $\nu \in \mathcal P(\mathcal{A}_{\neg\varphi})$, a polynomial control barrier certificate $B_{\nu}(x)$ and controller $u_{\nu}(x)$ as in \eqref{eq_SCH_controller} can be computed using SOSTOOLS \cite{1184594} in conjunction with a semidefinite programming solver such as SeDuMi \cite{sturm1999using}. The computed barrier certificate will satisfy conditions in Theorem~\ref{barrier} while minimizing constants $\gamma_{\nu}$ and $c_{\nu}$.
Having values of $\gamma_{\nu}$ and $c_{\nu}$ for all $\nu \in \mathcal P(\mathcal{A}_{\neg\varphi})$, one can simply utilize results of Theorem~\ref{upper_bound} and Corollary~\ref{lower_bound} to compute a lower bound on the probability of satisfying the given specification to check the solution to Problem~\ref{SCS_prob}. 
{\begin{remark}\label{remark_minimize}
		To minimize the values of $\gamma_{\nu}$ and $c_{\nu}$ for each $\nu \in \mathcal P(\mathcal{A}_{\neg\varphi})$, one can simply utilize the bisection procedure by iteratively fixing $\gamma_{\nu}$ and minimizing over $c_\nu$ and then fixing the obtained $c_\nu$ and minimizing over $\gamma_{\nu}$. In this way, we give priority to minimizing $c_\nu$ to obtain a tight upper bound $(\gamma_{\nu}+c_\nu T_d) $ which is less sensitive to the finite time horizon $T_d$.
\end{remark}}
\begin{remark}
The procedure discussed above may result in a more conservative probability bounds due to the computation of common control barrier certificate in some cases. To obtain less conservative bounds one can simply substitute the constructed control policy in dynamics of the system and recompute barrier certificates minimizing constants $\gamma_{\nu}$ and $c_{\nu}$ for each $\nu \in \mathcal P(\mathcal{A}_{\neg\varphi})$ using Lemma~\ref{sos}. Then utilize these values to compute $\underline{\vartheta}$ in Problem~\ref{SCS_prob} using Theorem~\ref{upper_bound} and Corollary~\ref{lower_bound}. 
\end{remark}


\def\example{\par\noindent{\bf Example 1.} \ignorespaces}
\def\endexaple{}

\begin{example}(continued)
To compute control policy $u_\nu(x)$ and values of $\gamma_\nu$ and $c_\nu$ for each $\nu\in\mathcal{P}(\mathcal{A_{\neg\varphi}})$, we use SOS optimization according to Lemma~\ref{sos}  and minimize values of $\gamma$ and $c$ using bisection method. The optimization problem is solved using SOSTOOLS and SeDuMi. We choose barrier certificates $B$, SOS polynomials $\lambda_0, \lambda_1,\lambda$, and polynomial controller $\lambda_u$ of orders $4$, $2$, $2$, $2$ and $2$, respectively. The obtained controllers $u_\nu(x)$ and values of $\gamma_\nu$ and $c_\nu$ are listed in Table~\ref{table_1}. Now using Theorem~\ref{upper_bound}, one gets
\begin{align*}
\mathbb{P}^{x_0}_\rho\{L(\textbf{x}_N) \models\neg \varphi\}&\leq 4.883e\text{-}4 \times 0.002 + 4.883e\text{-}4  \times 9.766e\text{-}4  = 1.453e\text{-}6 , \text{ for all }x_0\in L^{-1}(p_0); \\
\mathbb{P}^{x_0}_\rho\{L(\textbf{x}_N) \models\neg \varphi\}&\leq 9.766e\text{-}4 ,  \text{ for all }x_0\in L^{-1}(p_2); \text{ and }\\
\mathbb{P}^{x_0}_\rho\{L(\textbf{x}_N) \models\neg \varphi\}&=1, \text{ for all }x_0\in L^{-1}(p_1)\cup L^{-1}(p_3).
\end{align*}
The control policy is given by $\rho(x,q_{\mathfrak m})=u_{\mu_{(q_{\mathfrak m}')}}(x)$, where $(q_{\mathfrak m},L(x),q_{\mathfrak m}')\in\delta_M$ is a transition in DFA $\mathcal{A}_{\mathfrak m}$ shown in Figure~\ref{SCH_swithing}. 
\begin{table}[]
	\caption{Controllers $u_\nu(x)$, constants $\gamma_\nu$, and $c_\nu$ for all $\nu\in\mathcal{P}(\mathcal{A_{\neg\varphi}})$, where $c_\nu=0$.}
	\label{table_1}
	\begin{tabular}{@{}lll@{}}
		\toprule
		$\mu_{(q,q',\Delta(q'))}$               & \multicolumn{1}{c}{\begin{tabular}[c]{@{}c@{}}$u_\nu(x)=a_0x_1^2+a_1x_1x_2+a_2x_1+a_3x^2+a_4x_2+a_5$\\ $[a_0,a_1,a_2,a_3,a_4,a_5]$\end{tabular}} & \multicolumn{1}{c}{$\gamma_\nu$} \\ \midrule
		$\{(q_0,q_1,q_2,3),(q_0,q_1,q_4,3)\}$ & [1.745e-3, 3.664e-6, $-$1.884e-4, 1.938e-3, 3.886e-4, 0.161]                                                                                         & 4.883e-4                         \\
		$\{q_1,q_2,q_3,3\}$                   & [1.321e-3, 3.252e-5, 2.544e-4, 1.828e-3, 4.212e-3, 0.228]                                                                                          & 0.002                            \\
		$\{q_1,q_4,q_3,3\}$                   & [1.754e-3, $-$6.636e-6, 1.636e-4, 1.934e-3, $-$2.170e-3, 0.163]                                                                                         & 9.766e-4                         \\
		$\{q_0,q_4,q_3,4\}$                   & [1.754e-3, 
		$-$6.636e-6, 1.636e-4, 1.934e-3, $-$2.170e-3, 0.163]                                                                                       & 9.766e-4                         \\ \bottomrule
	\end{tabular}
\end{table}
\qed
\end{example}

\subsubsection{Finite input sets}\label{CEGIS}
We use a counter-example guided inductive synthesis (CEGIS) framework to find control barrier certificates for the system $\mathfrak S$ with a finite input set $U$. The approach uses satisfiability (feasibility) solvers for finding barrier certificate of a given parametric form that handles quantified formulas by alternating between series of quantifier-free formulas using existing satisfiability modulo theories (SMT) solvers ({\it viz.},  Z3 \cite{Z3solver}, dReal \cite{dreal}, and OptiMathSAT \cite{sebastiani2015optimathsat}). In order to use CEGIS framework, we raise the following assumption.

\begin{assumption}\label{ass:BC1}
	System $\mathfrak S$ has a compact state set $X \subset \R^n$ and a finite input set $U=\{u_1,u_2,\ldots,u_l\}$, where $u_i\in\R^m$, $i\in\{1,2,\ldots,l\}$.
	Partition sets $X_i = L^{-1}(p_i)$, $i\in\{0,1,2,\ldots, M\}$, are bounded semi-algebraic sets. 
\end{assumption}
Under Assumption~\ref{ass:BC1}, we can formulate conditions of Theorem~\ref{barrier} as a satisfiability problem which can search for parametric control barrier certificate using CEGIS approach. The following Lemma gives a feasibility condition that is equivalent to conditions of Theorem~\ref{barrier}.
\begin{lemma}
	\label{sos1}
	Suppose Assumption~\ref{ass:BC1} holds and $X_0,X_1, X$ are bounded semi algebraic sets. Suppose there exists a function $B(x)$, constants $\gamma\in [0,1]$, and $c\ge 0$, such that following expression is true 
	\begin{align}
	\bigwedge_{x\in X} B(x)\geq 0 
	\bigwedge_{x\in X_0} B(x)\leq \gamma 
	\bigwedge_{x\in X_1} B(x)\geq1
	\bigwedge_{x\in X}\Big( \bigvee_{u\in U} (\mathbb{E}[B(f(x,u,w))\mid x,u]\leq B(x)+c) \Big). \label{eq:sos4}
	\end{align}
	Then, $B(x)$ satisfies conditions of Theorem~\ref{barrier} and any $u:X\rightarrow U$ with $u(x)\in \{u_i \in U\mid \mathbb{E}[B(f(x,u_i))\mid x,u_i]\leq B(x)+c\}$ is a corresponding control policy.
\end{lemma}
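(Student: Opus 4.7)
The plan is to unpack the single big conjunction in \eqref{eq:sos4} into four sub-statements and match each one to a hypothesis of Theorem~\ref{barrier} via Definition~\ref{CBC_def1}. The four conjuncts were deliberately chosen to mirror those hypotheses, so no algebraic manipulation is required; the argument is essentially a translation between a logical formula and a set of conditions.

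Concretely, the first conjunct $\bigwedge_{x\in X} B(x)\geq 0$ establishes that $B$ maps $X$ into $\mathbb{R}_0^+$, which is the codomain requirement in Definition~\ref{CBC_def1}. The fourth conjunct asserts that for every $x\in X$ there exists some $u\in U$ with $\mathbb{E}[B(f(x,u,w))\mid x,u]\leq B(x)+c$; since $U=\{u_1,\dots,u_l\}$ is finite, the existential quantifier over $U$ is equivalent to the minimum over $U$, yielding $\min_{u\in U}\mathbb{E}[B(f(x,u,w))\mid x,u]\leq B(x)+c$ for every $x\in X$, which is exactly condition \eqref{cmart1}. Together these two conjuncts show that $B$ is a control barrier certificate in the sense of Definition~\ref{CBC_def1}. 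The second and third conjuncts immediately deliver \eqref{ine_1} and \eqref{ine_2} of Theorem~\ref{barrier} after identifying $X_a=X_0$ and $X_b=X_1$. Hence all hypotheses of Theorem~\ref{barrier} are in place.

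It remains to exhibit a stationary policy of the form claimed. Following Remark~\ref{control_policy}, for each $u_i\in U$ define $\mathsf X_i:=\{x\in X\mid \mathbb{E}[B(f(x,u_i,w))\mid x,u_i]\leq B(x)+c\}$. The fourth conjunct of \eqref{eq:sos4} is precisely the statement that $\bigcup_{i=1}^{l}\mathsf X_i=X$, and each $\mathsf X_i$ is measurable because $B$ and $f$ are. A measurable selection $u:X\to U$ can then be built by fixing any enumeration of $U$ and setting $u(x)=u_i$ on $\mathsf X_i\setminus\bigcup_{j<i}\mathsf X_j$; by construction $u(x)$ lies in the set $\{u_i\in U\mid \mathbb{E}[B(f(x,u_i,w))\mid x,u_i]\leq B(x)+c\}$ displayed in the lemma. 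The only subtlety in the whole argument is this measurability of the selection, but with $U$ finite it reduces to a standard partitioning argument and so I do not expect any real obstacle.
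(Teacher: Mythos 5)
Your proposal is correct and follows exactly the argument the paper intends: the paper states Lemma~\ref{sos1} without an explicit proof precisely because each conjunct of \eqref{eq:sos4} is a direct restatement of the nonnegativity of $B$, condition \eqref{cmart1} of Definition~\ref{CBC_def1} (the disjunction over the finite set $U$ being the minimum), and conditions \eqref{ine_1}--\eqref{ine_2} of Theorem~\ref{barrier} with $X_a=X_0$, $X_b=X_1$, with the policy chosen as in Remark~\ref{control_policy}. Your additional remark on the measurability of the selection over the finite input set is a harmless refinement beyond what the paper records.
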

Now, we briefly explain the idea of CEGIS framework for computation of such a function $B(x)$.
\begin{itemize}
	\item[1.] Define a parameterized control barrier certificate of the form $B(p,x)=\sum_{i=1}^{\mathsf r}p_i b_i(x)$, where {basis functions $b_i(x)$ are monomials}, $p_i\in\R$ are unknown coefficients, and $i\in\{1,2,\ldots,\mathsf r\}$. 
	\item[2.] Select a finite set of samples $\overline{X}\subset X$, a constant $\gamma\in[0,1]$, and $c\ge 0$.
	\item[3.] Compute a candidate control barrier certificate $B(p,x)$ (\ie, coefficients $p_i$) such that the following expression is true.
	\begin{align*}
	\psi(p,x):=&\bigwedge_{x\in \overline{X}} B(p,x)\geq 0 
	\bigwedge_{x\in \overline{X}\cap X_0} B(p,x)\leq \gamma 
	\bigwedge_{x\in \overline{X}\cap X_1} B(p,x)\geq1\\
	&\bigwedge_{x\in \overline{X}}\Big( \bigvee_{u\in U} (\mathbb{E}[B(p,f(x,u,w))\mid x,u]\leq B(p,x)+c) \Big).
	\end{align*}
	The above expression results in linear arithmetic formula that involves boolean combinations of linear inequality constraints in $p_i$, which can be efficiently solved with the help of SMT solvers Z3 \cite{Z3solver} or OptiMathSAT \cite{sebastiani2015optimathsat}. 
	\item[4.] Search for a counter example $x_c\in X$ such that the candidate solution $B(p,x)$ obtained in the previous step satisfies $\neg \psi(p,x)$. Note that for a given $p$, satisfaction of $\neg \psi(p,x)$ is equivalent to the feasibility of a nonlinear constraint over $x$.
	If $\neg \psi(p,x)$ has no feasible solution, the obtained candidate solution is a true control barrier certificate for all $x\in X$ which terminates the algorithm.
	Otherwise, if $\neg \psi(p,x)$ is feasible for some $x=x_c\in X$, then we add that counter-example $x_c$ to the finite set, $\overline{X}:=\overline{X}\cup \{x_c\}$, and reiterate Steps 3--4. \\
	There are two possible ways to search for counter-examples:
	\begin{itemize}
		\item[(a)] \textit{Using SMT solvers}: To check satisfiability of $\neg \psi(p,x)$, one can use an SMT solver that can handle nonlinear constraints. For example, dReal \cite{dreal} is a general purpose nonlinear delta-satisfiability solver suitable for solving quantifier-free nonlinear constraints involving polynomials, trigonometric, and rational functions over compact sets $X$. We refer the interested readers to \cite{Ravanbakhsh2017ACO} for a more detailed discussion.
		\item[(b)] \textit{Using nonlinear optimization toolboxes}:
		To find counter-examples, one can alternatively solve a nonlinear optimization problem and check satisfaction of the following condition
		\begin{align*}
			&\text{If }\Big(\min_{x\in X}B(p,x)<0, \text{ OR } \min_{x\in X_0}-B(p,x)+\gamma<0, \text{ OR } \min_{x\in X_1}B(p,x)-1<0, \\
			&\quad\text{ OR } \min_{x\in X}\max_{u\in U}-\mathbb{E}[B(p,f(x,u,w))\mid x,u]+B(p,x)+c<0\Big)\\
			&\text{Then } x \text{ is a counter-example.} 
		\end{align*}
		
		To solve nonlinear optimization problems, one can use existing numerical optimization techniques such as sequential quadratic programming. Note that, the methods may run into local optima, however, one can utilize multi-start techniques \cite{Marta2003} to obtain global optima.
		For the final rigorous verification step, one can use tools like RSolver\footnote{http://rsolver.sourceforge.net} which extends a basic interval branch-and-bound method with interval constraint propagation. A detailed discussion on the verification algorithm used in RSolver can be found in \cite{ratschan2006efficient,ratschan2017simulation}.
	\end{itemize}
\end{itemize}
{This CEGIS algorithm is then iterated to minimize the values of $\gamma$ and $c$ in~\eqref{eq:sos4} as discussed in Remark~\ref{remark_minimize}. Note that, the CEGIS procedure either
	$(i)$ terminates after some finite iterations with a control barrier certificate satisfying \eqref{eq:sos4},
	$(ii)$ terminates with a counter example proving that no solution exists, or
	$(iii)$ runs forever. In order to guarantee termination of the algorithm, one can set an upper bound on the number of unsuccessful iterations.}
{\subsection{Computational Complexity}
	Characterizing the computational complexity of the proposed approaches is a very difficult task in general. However, in this subsection, we provide some analysis on the computational complexity.}

{From the construction of directed graph $\mathcal{G}=(\mathcal{V},\mathcal{E})$, explained in Section \ref{runs}, the number of triplets and hence the number of control barrier certificates needed to be computed are bounded by $|\mathcal{V}|^3 = |Q|^3$, where $|\mathcal{V}|$ is the number of vertices in $\mathcal{G}$. However, this is the worst-case bound. In practice, the number of control barrier certificates is much less. In particular, it is given by the number of all unique successive pairs of atomic propositions corresponding to the elements $\nu\in  P(\mathcal{A}_{\neg\varphi})$. 
	Further, it is known that $|Q|$ is at most $|\neg\varphi |2^{|\neg\varphi |}$, where $|\neg\varphi |$ is the length of formula $\neg\varphi$ in terms of number of operations \cite{baier2008principles}, but in practice, it is much smaller than this bound \cite{klein2006experiments}.}

{In the case of sum-of-squares optimization, the computational complexity of finding polynomials $B,\lambda_0,\lambda_1,\lambda_{u_i},$ and $\lambda_x$ in Lemma~\ref{sos} depends on both the degree of polynomials appearing in \eqref{eq:sos1}-\eqref{eq:sos2} and the number of state variables. It is shown that for fixed degrees, the required computations grow polynomially with respect to the dimension \cite{7364197}. Hence, we expect that this technique is more scalable in comparison with the discretization-based approaches, especially for large-dimensional systems. For the CEGIS approach, due to its iterative nature and lack of guarantee on termination, it is difficult to provide any analysis on the computational complexity.}

\section{Case Studies}\label{sec_case}
In this section, we consider two case studies to demonstrate the effectiveness of our results.  
\subsection{Temperature control of a room}
 We consider evolution of a room temperature given by stochastic difference equation 
\begin{align}\label{temp_model}
x(k+1)=x(k)+\tau_s(\alpha_{e}(T_e-x(k))+\alpha_{H}(T_h-x(k))u(k))+0.1 w(k),
\end{align} 
where $x(k)$ denotes the temperature of the room, $u(k)$ represents ratio of the heater valve being open, $w(k)$ is a standard normal random variable that models environmental uncertainties, $\tau_s=5$ minutes is the sampling time, $T_h=55^{\circ}C$ is the heater temperature, $T_e=15^{\circ}C$ is the ambient temperature, and  $\alpha_{e}=8\times10^{-3}$ and $\alpha_{H}=3.6\times10^{-3}$ are heat exchange coefficients. All the parameters are adopted from \cite{jagtap2017quest}. \\
The state set of the system is $X\subseteq \R$. We consider regions of interest $X_0=[21, 22]$,  $X_1=[0,20]$, $X_2=[23,45]$, and $X_3=X\setminus(X_0\cup X_1\cup X_2)$. The set of atomic propositions is given by $\Pi=\{p_0,p_1,p_2,p_3\}$ with labeling function $L(x_i)=p_i$ for all $x_i\in X_i$, $i\in\{0,1,2,3\}$. The objective is to compute a control policy with a potentially tight lower bound on the probability that the state evolution of length $N= 50$ satisfies the LTL$_F$ formula $\varphi=p_0\wedge\square\neg(p_1\vee p_2)$. The DFA $\mathcal{A}_{\neg\varphi}$ corresponding to $\neg\varphi$ is shown in Figure~\ref{DFA3}. One can readily see that, we have sets $\mathcal{P}^{p_0}=\{(q_0,q_1,q_2,49)\}$ and $\mathcal{P}^{p_1}=\mathcal{P}^{p_2}=\mathcal{P}^{p_3}=\emptyset$.
Next, we discuss the computational results for two cases of finite and continuous input sets.  
\begin{figure}[t] 
	\centering
	\includegraphics[scale=0.6, height = 2.8cm]{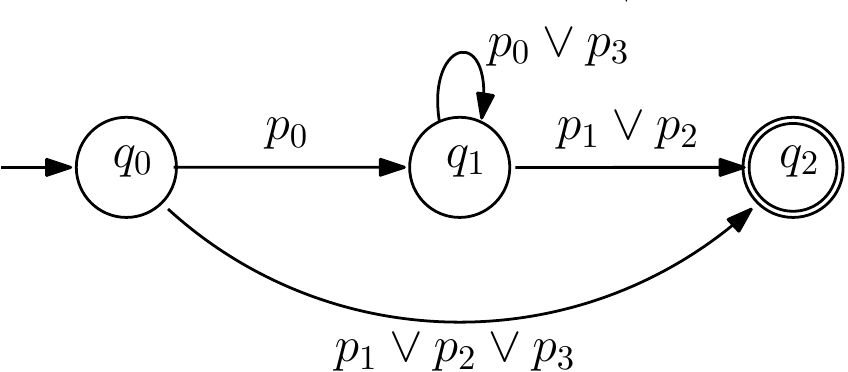} 
	\caption{DFA $\mathcal{A}_{\neg\varphi}$ that accept all traces of $\neg\varphi$, where $\varphi=p_0\wedge\square\neg(p_1\vee p_2)$.}
	\label{DFA3}
\end{figure}
\subsubsection{Finite input set.}
We consider that the control input $u(k)$ takes value in the set $U=\{0,0.5,1 \}$ (the heater valve is either closed, half open, or full open) and the temperature lies in the bounded set $X=[0,45]$.
We compute a control barrier certificate of order $4$ using the CEGIS approach discussed in Subsection~\ref{CEGIS} as the following:
$$B(x)=0.2167x^4-18.6242x^3+6.0032e2x^2-8.5998e3 x+4.6196e4.$$
The corresponding control policy is
\begin{equation}\label{contr_2}
u(x)= \min \{u_i \in U\mid \mathbb{E}[B(f(x,u_i))\mid x,u_i]\leq B(x)+c\}.
\end{equation}
%
One can readily see that the DFA of switching mechanism $\mathcal{A}_{\mathfrak m}$ contains only three states $Q_{\mathfrak m}=\{(q_0,\Delta(q_0)),$ $(q_0,q_1,\Delta(q_1)), q_2 \}$, thus we have control policy $\rho(x,q_{\mathfrak m})\equiv u(x)$. The lower bound $\mathbb{P}^{x_0}_\rho\{L(\textbf{x}_N) \models \varphi\} \geq 0.9766$ for all $x_0\in L^{-1}(p_0)$ is obtained using SMT solver Z3 and employing sequential quadratic programming for computing counterexamples as described in Subsection~\ref{CEGIS}. Values of $\gamma$ and $c$ are obtained as 0.008313 and 0.0003125, respectively. The implementation performed using Z3 SMT solver along with sequantial quadratic program in Python on an iMac (3.5 GHz Intel Core i7 processor) and it took around 4 minutes to find a control barrier certificate and the associated lower bound. Figure~\ref{cproperties} depicts the barrier certificate and the corresponding conditions in Theorem~\ref{barrier}: condition~\eqref{ine_1} is shown in a snippet in the top figure, condition~\eqref{ine_2} is shown in the top figure, and condition~\eqref{CBC_def1} for the control barrier certificate with control input $u(x)$ is shown in the bottom figure.
Figure~\ref{controller_u} presents the control policy $u:X\rightarrow U$ in \eqref{contr_2} and 
Figure~\ref{response1}  shows a few realizations of the temperature under this policy.
\begin{figure}[t] 
	\centering
	\includegraphics[scale=0.6]{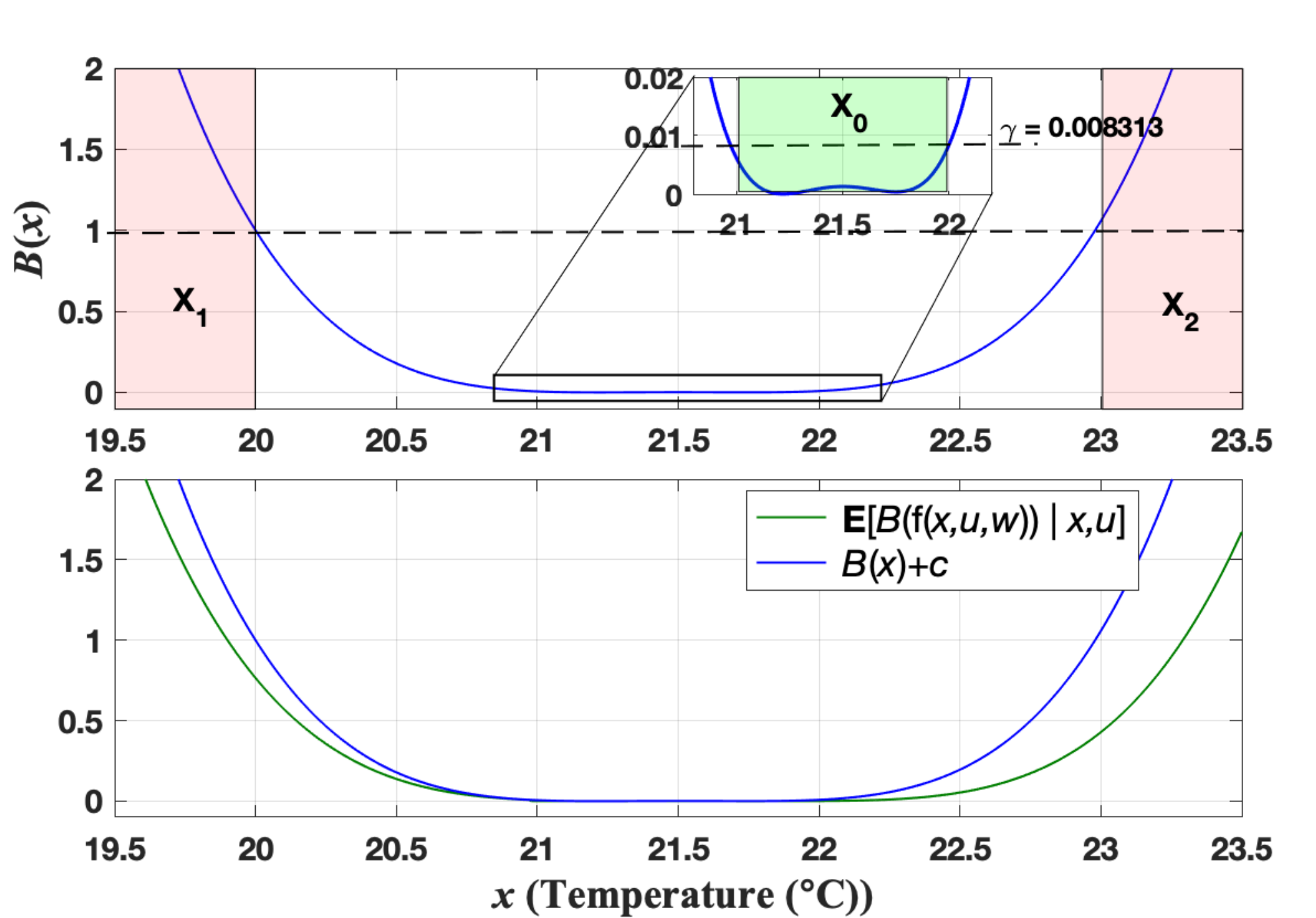} 
	\caption{Room temperature control: barrier certificate and the associated conditions from Theorem~\ref{barrier}. Condition \eqref{ine_1} is shown in the snippet in the top figure, condition \eqref{ine_2} is shown in the top figure, and condition~\eqref{CBC_def1} for the control barrier certificate under policy $u(x)$ is shown in the bottom figure.}
	\label{cproperties}
\end{figure}
\begin{figure}[t] 
	\centering
	\includegraphics[scale=0.56]{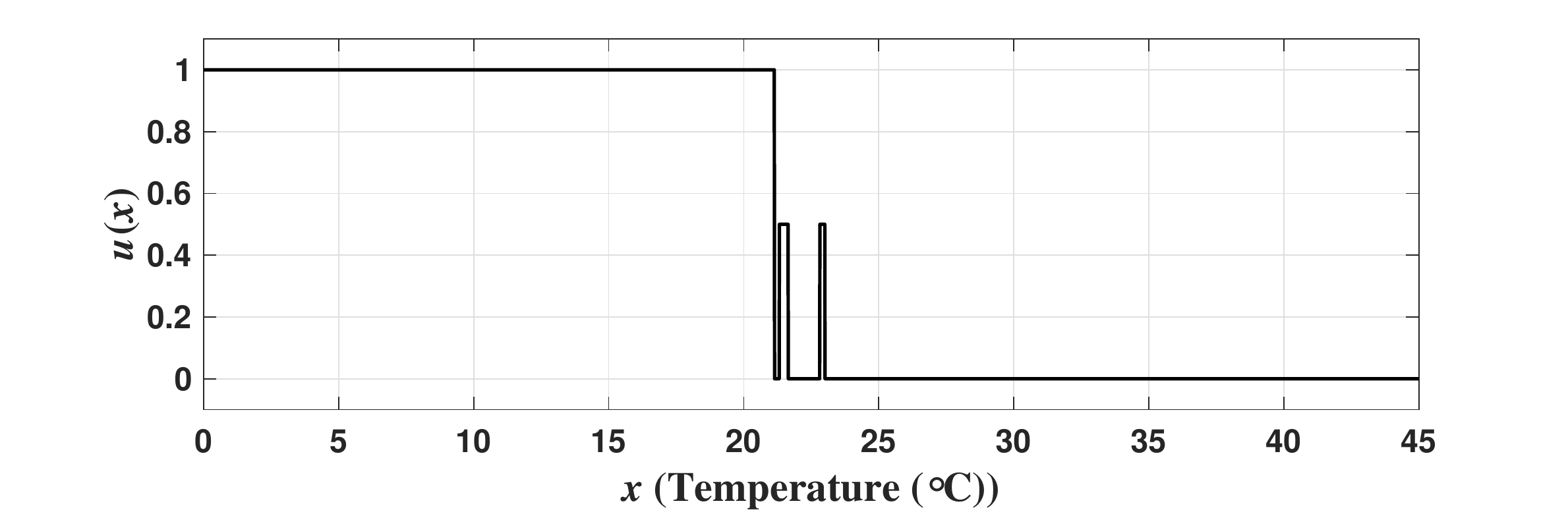} 
	\caption{Room temperature control: control policy $u: X\rightarrow \{0,0.5,1\}$ as given in \eqref{contr_2}.}
	\label{controller_u}
\end{figure}
\begin{figure}[t] 
	\centering
	\includegraphics[scale=0.56]{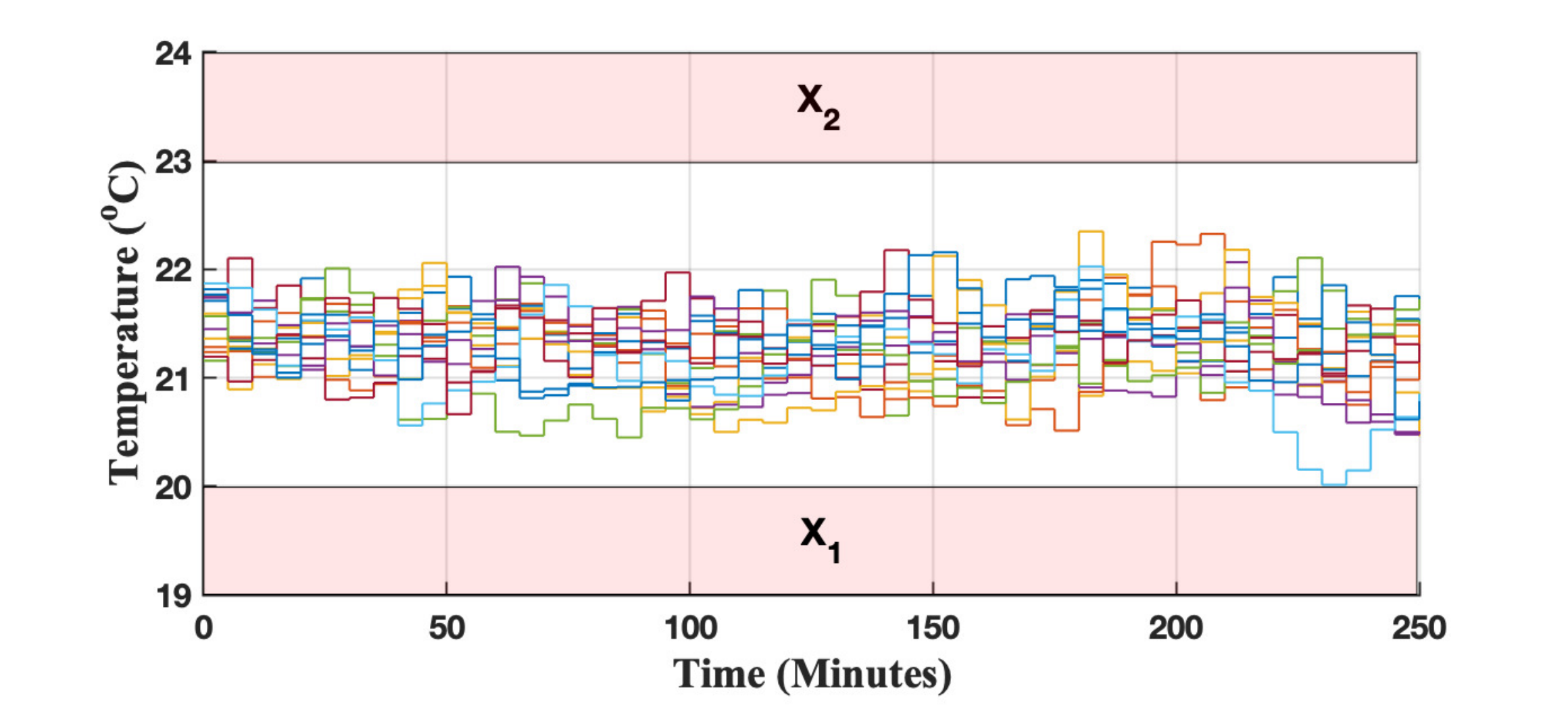} 
	\caption{Room temperature control: temperature evolution under control policy in \eqref{contr_2}.}
	\label{response1}
\end{figure}

\subsubsection{Continuous input set.}
Let us assume the system has the state space $X=\R$ and the continuous input set $U=[0,1]$ (the heater valve can be positioned continuously from fully closed to fully open). As described in Subsection~\ref{SOSS}, using Lemma~\ref{sos} we compute a control barrier certificate of order 4 as follows
$$B(x)=0.1911x_1^4-16.4779x_1^3+532.6393x_1^2-7651.3308x_1+41212.3666,$$
and the corresponding control policy of order 4 as
\begin{equation} \label{ccc}
u(x)=-1.018e\text{-}6x^4+7.563e\text{-}5x^3-0.001872x^2+0.02022x+0.3944.
\end{equation}
The values $\gamma=0.015625$, $c=0.00125$, and the lower bound $\mathbb{P}^{x_0}_\rho\{L(\textbf{x}_N) \models\varphi\} \geq 0.9281$ is obtained using SOSTOOLS and SeDuMi for all $x_0\in L^{-1}(p_0)$, as discussed in Subsection~\ref{SOSS}.
{The bound in this case is more conservative than the previous case with a finite input set. This is mainly due to the optimization algorithm that assumes fixed-degree polynomials $B(\cdot)$,
	$\lambda_0(\cdot)$, $\lambda_1(\cdot)$, $\lambda_x(\cdot)$, and $\lambda_u(\cdot)$.
	The computed lower bound can be improved by increasing the polynomial degrees but will result in a larger computational cost.}
The control policy and a few realizations of the temperature under this policy are shown in figures \ref{contr_1} and \ref{response}, respectively. 

\begin{figure}[t] 
	\centering
	\includegraphics[scale=0.56]{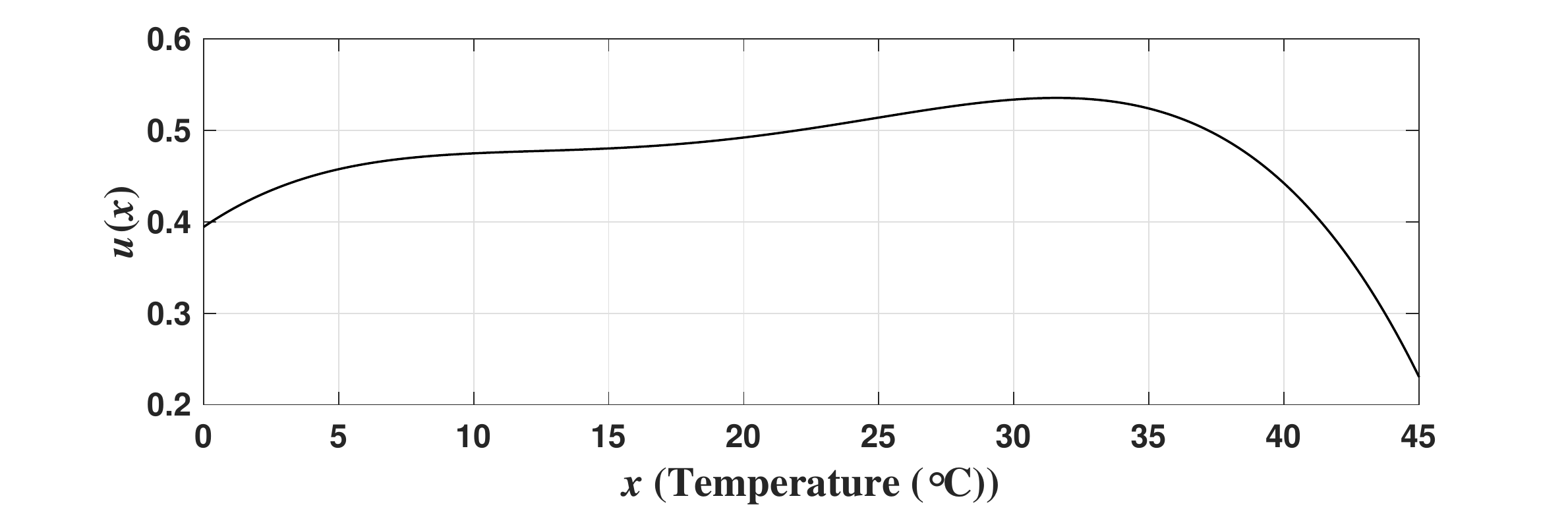} 
	\caption{Room temperature control: control policy $u: X\rightarrow [0,1]$ as given in \eqref{ccc}.}
	\label{contr_1}
\end{figure}

\begin{figure}[t] 
			\centering
			\includegraphics[scale=0.56]{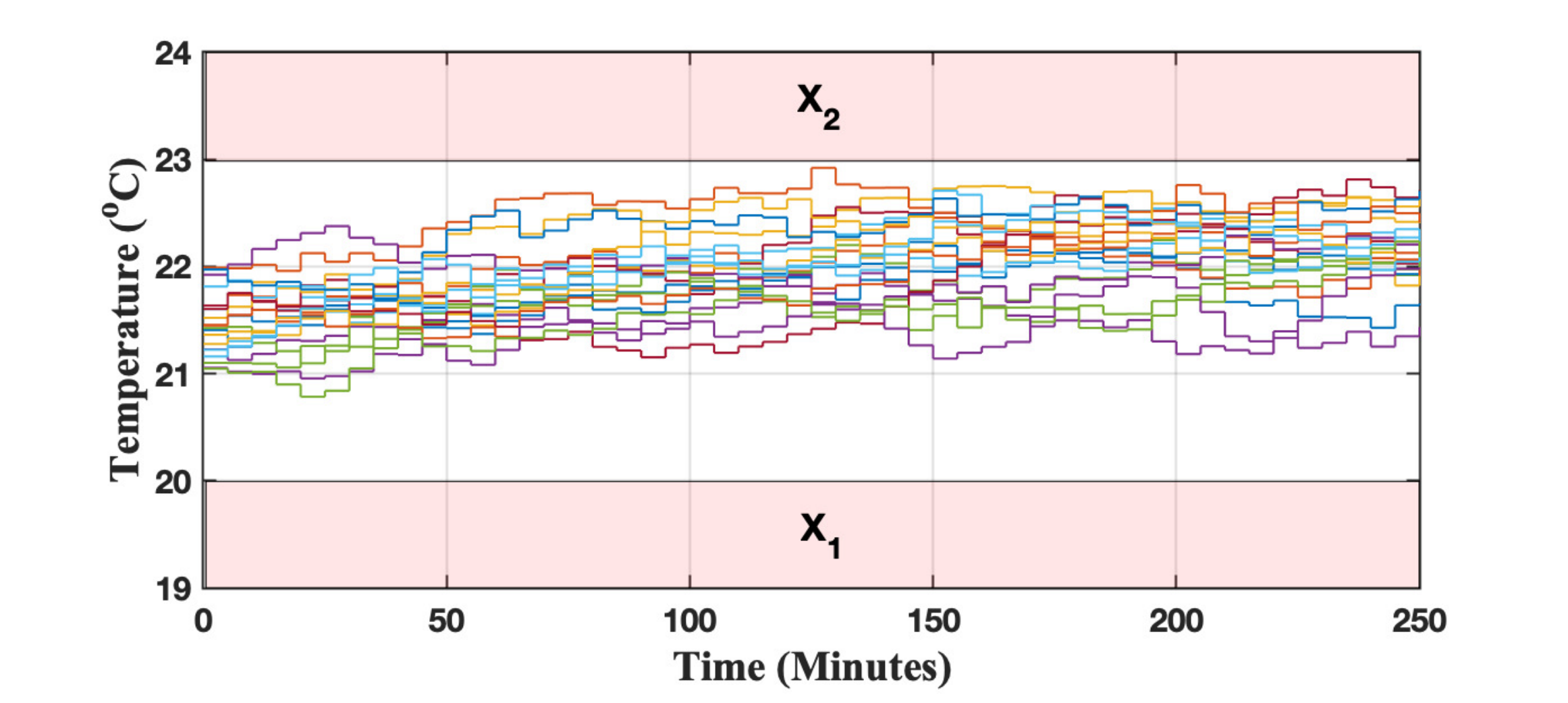} 
			\caption{Room temperature control: temperature evolution under control policy in \eqref{ccc}.}
			\label{response}
\end{figure}

{Discretization-based approaches provide a policy that is generally time-dependent. So it is not possible to directly compare our approach with them. However, using these techniques, we can validate the lower bound provided by our approach a posteriori.
	For this purpose, we combine our synthesized policy with the system to obtain an autonomous system and then use the toolbox FAUST$^2$ \cite{Faust} that computes an interval for the probability based on finite abstractions of the system. 
	The toolbox takes around $4$ minutes to verify the system using $314$ abstract states.
	The probability satisfies
	$$\mathbb{P}^{x_0}_\rho\{L(\textbf{x}_N) \models\varphi\}\in [1-5.458\times 10^{-4}, 1-3.612\times 10^{-4}],$$
	for all $x_0\in L^{-1}(p_0)$,
	which confirms the lower bound provided by our approach. For the purpose of comparison, we} { run the example using FAUST$^2$ to synthesize a time-dependent policy. In this case, the toolbox takes around 7 minutes and provides probability interval as 
	$$\mathbb{P}^{x_0}_\rho\{L(\textbf{x}_N) \models\varphi\}\in [1-1.5434\times 10^{-7},1-4.277\times 10^{-8}],$$ for all $x_0\in L^{-1}(p_0)$.}




\subsection{Lane keeping of a vehicle}
For the second case study, we consider a kinematic single-track model of a vehicle, specifically, BMW 320i, adopted from \cite{Althoff2017a} by discretizing the model and adding noises to capture the effect of uneven road. The corresponding nonlinear stochastic difference equation is
\begin{align*}
x_1(k+1) &= x_1(k)+\tau_s v \cos(x_4(k))+0.1w_1(k)\\
x_2(k+1) &= x_2(k)+\tau_s v \sin(x_4(k))+0.01w_2(k)\\
x_3(k+1) &= x_3(k)+\tau_s u(k)\\
x_4(k+1) &= x_4(k)+\frac{\tau_s v}{l_{wb}}\tan(x_3(k))+0.0005w_3(k),
\end{align*}  
where states $x_1, x_2,x_3,$ and $x_4$ represent $x$, $y$, the steering angle $\delta$, and the heading angle $\Psi$, respectively. The schematic showing states in the single-track model is shown in Figure~\ref{car_model}. The control input representing steering velocity is denoted by $u$. The terms $w_1$, $w_2$, and $w_3$ are noises in position and heading generated due to uneven road modeled using standard normal distribution. The parameters $\tau_s=0.01 s$, $l_{wb}=2.578 m$, and $v=10 m/s$ represent the sampling time, the wheelbase, and velocity, respectively.\\
We consider the state set $X=[0,50]\times[-6,6]\times[-0.05,0.05]\times[-0.1,0.1]$, finite input set $U=\{-0.5,0,0.5\}$, regions of interest $X_0=[0,5]\times[-0.1,0.1]\times[-0.005,0.005]\times[-0.05,0.05]$, $X_1=[0,50]\times[-6,-2]\times[-0.05,0.05]\times[-0.1,0.1]$, $X_1=[0,50]\times[2,6]\times[-0.05,0.05]\times[-0.1,0.1]$, and $X_3=X\setminus(X_0\cup X_1\cup X_2)$. The set of atomic propositions is given by $\Pi=\{p_0,p_1,p_2,p_3\}$ with labeling function $L(x_i)=p_i$ for all $x_i\in X_i$, $i\in\{0,1,2,3\}$. Our goal is to design a control policy to keep the vehicle in the middle lane for the time horizon of 4 seconds (\ie, $N=400$). The specification can be written as an LTL$_F$ formula $\varphi=p_0\wedge\square\neg(p_1\vee p_2)$.
Using CEGIS approach discussed in Subsection~\ref{CEGIS}, we compute a control barrier certificate as the following:
\begin{align*}
B(x)=&2.1794e\text{-}6 x_1^2+6.2500e\text{-}2 x_2^2-15.3131 x_3^2+1.0363 x_4^2+1.3088e\text{-}4 x_1\\&-4.4330e\text{-}5 x_2+0.3592 x_3-0.2488 x_4+5.9126e\text{-}2,
\end{align*}
and the corresponding control policy as
\begin{equation}\label{contr_3}
u(x)\in \{u_i \in U\mid \mathbb{E}[B(f(x,u_i))\mid x,u_i]\leq B(x)+c\},
\end{equation}
which guarantees $\mathbb{P}^{x_0}_\rho\{L(\textbf{x}_N) \models \varphi\}\ge 0.8688$ with values $\gamma=0.03125$ and $c=0.00025$. Figure~\ref{response_car} shows a few realizations of the system under the control policy \eqref{contr_3}. The implementation performed using the Z3 SMT solver along with the sequential quadratic program in Python on an iMac (3.5 GHz Intel Core i7 processor) and it took around 30 hours to find a control barrier certificate and the associated lower bound. 
Note that, since the procedure described in Subsection~\ref{CEGIS} is highly parallelizable, the execution time can be reduced significantly. 
Note that due to the large dimension of the state set, FAUST$^2$ is not able to give a lower bound on the probability of satisfaction. {However, for the sake comparison, we employ the Monte-Carlo approach to obtain the empirical probability interval as $\mathbb{P}^{x_0}_\rho\{L(\textbf{x}_N) \models \varphi\}\in [0.9202, 0.9630]$ with the confidence $1-10^{-10}$ using $10^5$ realizations with the controller in \eqref{contr_3}, which confirms the lower bound obtained using our approach.}
\begin{figure}[t] 
	\centering
	\includegraphics[scale=0.7]{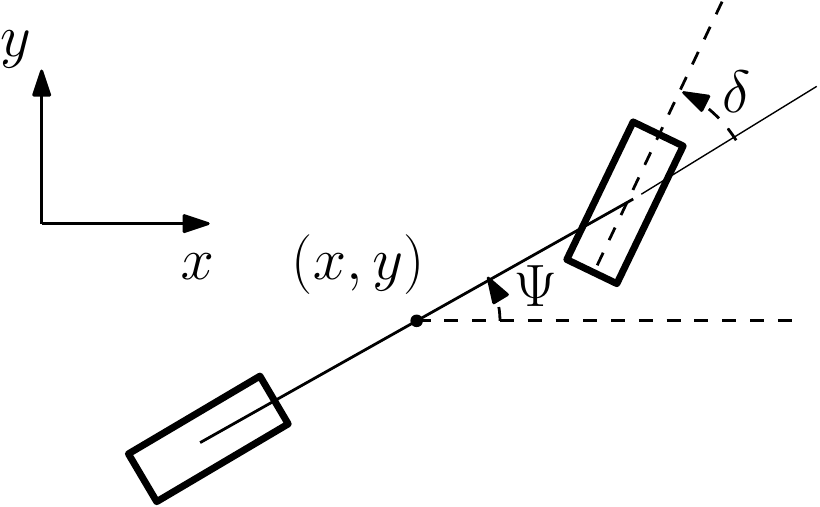} 
	\caption{Single-track model}
	\label{car_model}
\end{figure}     
\begin{figure}[t] 
	\centering
	\includegraphics[scale=0.56]{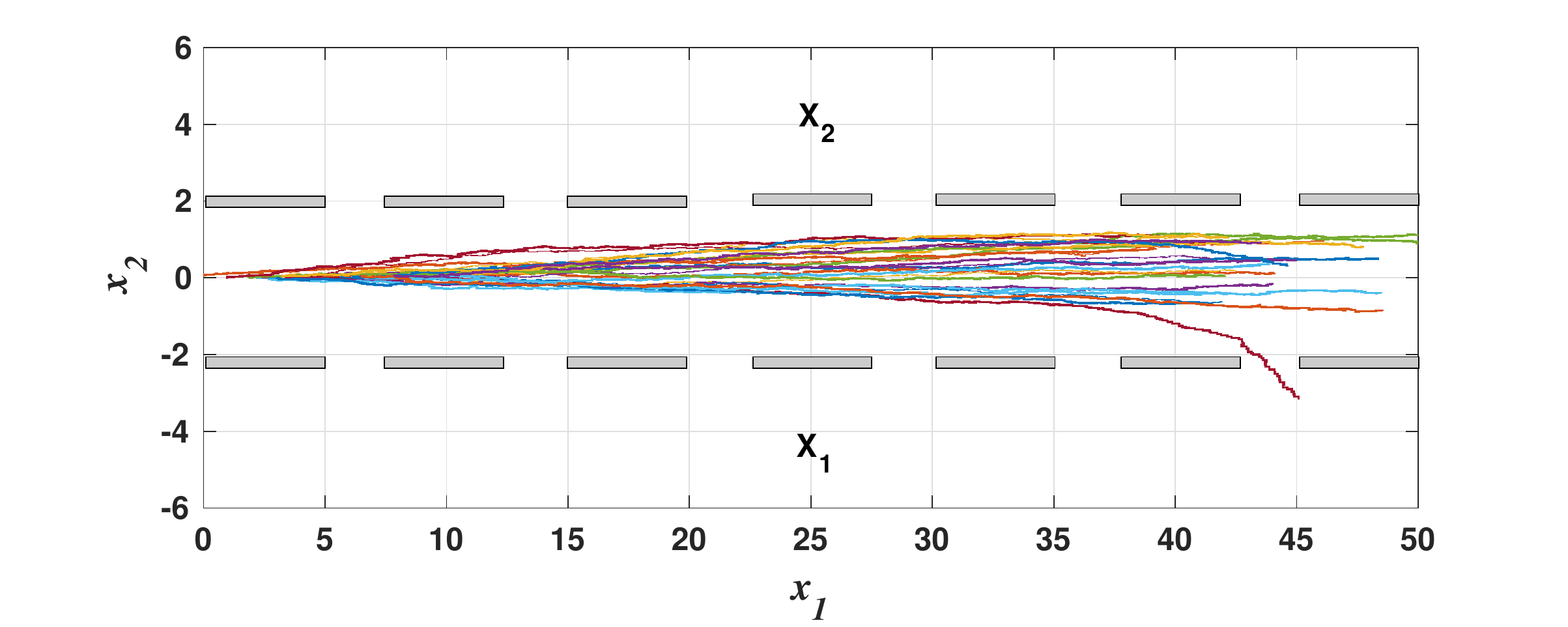} 
	\caption{Several closed-loop realization using controller in \eqref{ccc}.}
	\label{response_car}
\end{figure}
\section{Conclusion} \label{sec_conclusion}
In this paper, we proposed a discretization-free approach for the formal synthesis of discrete-time stochastic control systems. The approach computes a control policy together with a lower bound on the probability of satisfying a specification encoded as LTL over finite traces. It utilizes computation of control barrier certificates and uses sum-of-squares optimization or counter-example guided inductive synthesis to obtain such policies. {Currently, our approach is restricted to LTL$_F$ properties and is computationally applicable only to systems with dynamics that can be transformed into polynomial (in)equalities. The outcome of the approach is also restricted to stationary policies.}

Our approach can easily be extended to synthesize policies for continuous-time stochastic control systems enforcing LTL$_F$ specifications by excluding the next operator. The results may become more conservative in this case since an efficient computation of the temporal horizon $T(\cdot)$ as in~\eqref{eq:reachability} is not possible and one needs to consider the worst-case $T=N$. Although the proposed approach seems scalable in comparison with the discretization-based ones, we are actively working on improving scalability further by providing a compositional construction of control barrier certificates for large-scale systems (see \cite{jagtap2020compositional} for our recent work providing compositional construction of control barrier certificates for non-stochastic interconnected systems).
From the implementation point of view, we plan to provide an efficient toolbox leveraging parallel computations for solving these synthesis problems.

\bibliographystyle{alpha}
\bibliography{IEEEtran1}

\newcommand{\etalchar}[1]{$^{#1}$}
\begin{thebibliography}{ZMEM{\etalchar{+}}14}

\bibitem[AKM17]{Althoff2017a}
M.~Althoff, M.~Koschi, and S.~Manzinger.
\newblock Common{R}oad: Composable benchmarks for motion planning on roads.
\newblock In {\em Proc. of the IEEE Intelligent Vehicles Symposium}, pages 719
  -- 726, 2017.

\bibitem[APLS08]{abate2008probabilistic}
A.~Abate, M.~Prandini, J.~Lygeros, and S.~Sastry.
\newblock Probabilistic reachability and safety for controlled discrete time
  stochastic hybrid systems.
\newblock {\em Automatica}, 44(11):2724--2734, 2008.

\bibitem[AXGT17]{7782377}
A.~D. Ames, X.~Xu, J.~W. Grizzle, and P.~Tabuada.
\newblock Control barrier function based quadratic programs for safety critical
  systems.
\newblock {\em IEEE Transactions on Automatic Control}, 62(8):3861--3876, 2017.

\bibitem[BD18]{bisoffi2018hybrid}
A.~Bisoffi and D.~V. Dimarogonas.
\newblock A hybrid barrier certificate approach to satisfy linear temporal
  logic specifications.
\newblock In {\em 2018 Annual American Control Conference (ACC)}, pages
  634--639. IEEE, 2018.

\bibitem[BKL08]{baier2008principles}
C.~Baier, J-P. Katoen, and K.~G. Larsen.
\newblock {\em Principles of model checking}.
\newblock MIT press, 2008.

\bibitem[BS96]{BS96}
D.~P. {Bertsekas} and S.~E. {Shreve}.
\newblock {\em Stochastic {O}ptimal {C}ontrol: {T}he {D}iscrete-{T}ime {C}ase}.
\newblock Athena Scientific, 1996.

\bibitem[BYG17]{belta2017formal}
C.~Belta, B.~Yordanov, and E.~A. Gol.
\newblock {\em Formal methods for discrete-time dynamical systems}, volume~89.
\newblock Springer, 2017.

\bibitem[DGV13]{de2013linear}
G.~De~Giacomo and M.~Y. Vardi.
\newblock Linear temporal logic and linear dynamic logic on finite traces.
\newblock In {\em International Joint Conference on Artificial Intelligence},
  volume~13, pages 854--860, 2013.

\bibitem[DGV15]{de2015synthesis}
G.~De~Giacomo and M.~Y. Vardi.
\newblock Synthesis for {LTL} and {LDL} on finite traces.
\newblock In {\em International Joint Conference on Artificial Intelligence},
  volume~15, pages 1558--1564, 2015.

\bibitem[DLLF{\etalchar{+}}16]{duret2016spot}
A.~Duret-Lutz, A.~Lewkowicz, A.~Fauchille, T.~Michaud, E.~Renault, and L.~Xu.
\newblock Spot 2.0: {A} framework for {LTL} and $ \omega $-automata
  manipulation.
\newblock In {\em International Symposium on Automated Technology for
  Verification and Analysis}, pages 122--129. Springer, 2016.

\bibitem[dMB08]{Z3solver}
L.~de~Moura and N.~Bj{\o}rner.
\newblock Z3: {A}n efficient {SMT} solver.
\newblock In C.~R. Ramakrishnan and J.~Rehof, editors, {\em Tools and
  Algorithms for the Construction and Analysis of Systems}, pages 337--340.
  Springer, 2008.

\bibitem[EA87]{Ev_Ar87}
{I. V.} Evstigneev and V.~I. Arkin.
\newblock {\em Stochastic models of control and economic dynamics}.
\newblock Academic Press, Ltd., United Kingdom, 1987.

\bibitem[FMPS18]{farahani2018shrinking}
S.~S. Farahani, R.~Majumdar, V.~S. Prabhu, and S.~Soudjani.
\newblock Shrinking horizon model predictive control with signal temporal logic
  constraints under stochastic disturbances.
\newblock {\em IEEE Transactions on Automatic Control}, 64(8):3324--3331, 2018.

\bibitem[GKC13]{dreal}
S.~Gao, S.~Kong, and E.~M. Clarke.
\newblock d{R}eal: {A}n {SMT} solver for nonlinear theories over the reals.
\newblock In {\em International Conference on Automated Deduction}, pages
  208--214. Springer, 2013.

\bibitem[HCL{\etalchar{+}}17]{huang2017probabilistic}
C.~Huang, X.~Chen, W.~Lin, Z.~Yang, and X.~Li.
\newblock Probabilistic safety verification of stochastic hybrid systems using
  barrier certificates.
\newblock {\em ACM Transactions on Embedded Computing Systems}, 16(5s):186,
  2017.

\bibitem[HJJ{\etalchar{+}}95]{henriksen1995mona}
J.~G. Henriksen, J.~Jensen, M.~J{\o}rgensen, N.~Klarlund, R.~Paige, T.~Rauhe,
  and A.~Sandholm.
\newblock {MONA}: Monadic second-order logic in practice.
\newblock In {\em International Workshop on Tools and Algorithms for the
  Construction and Analysis of Systems}, pages 89--110. Springer, 1995.

\bibitem[HLL96]{hll1996}
O.~Hern{\'a}ndez-Lerma and J.~B. Lasserre.
\newblock {\em Discrete-time {M}arkov control processes}, volume~30 of {\em
  Applications of Mathematics}.
\newblock Springer, 1996.

\bibitem[HS20]{HS_TAC20}
S.~Haesaert and S.~Soudjani.
\newblock Robust dynamic programming for temporal logic control of stochastic
  systems.
\newblock {\em IEEE Transactions on Automatic Control, arXiv: abs/1811.11445},
  2020.

\bibitem[HWM14]{horowitz2014compositional}
M.~B. Horowitz, E.~M. Wolff, and R.~M. Murray.
\newblock A compositional approach to stochastic optimal control with co-safe
  temporal logic specifications.
\newblock In {\em 2014 IEEE/RSJ International Conference on Intelligent Robots
  and Systems}, pages 1466--1473. IEEE, 2014.

\bibitem[Jan18]{8430747}
M.~Jankovic.
\newblock Control barrier functions for constrained control of linear systems
  with input delay.
\newblock In {\em 2018 Annual American Control Conference (ACC)}, pages
  3316--3321, 2018.

\bibitem[JSZ18]{jagtap2018temporal}
P.~Jagtap, S.~Soudjani, and M.~Zamani.
\newblock Temporal logic verification of stochastic systems using barrier
  certificates.
\newblock In {\em International Symposium on Automated Technology for
  Verification and Analysis}, pages 177--193. Springer, 2018.

\bibitem[JSZ20]{jagtap2020compositional}
P.~Jagtap, A.~Swikir, and M.~Zamani.
\newblock Compositional construction of control barrier functions for
  interconnected control systems.
\newblock In {\em Proceedings of the 23rd International Conference on Hybrid
  Systems: Computation and Control}, pages 1--11, 2020.

\bibitem[JZ17]{jagtap2017quest}
P.~Jagtap and M.~Zamani.
\newblock {QUEST}: A tool for state-space quantization-free synthesis of
  symbolic controllers.
\newblock In {\em International Conference on Quantitative Evaluation of
  Systems}, pages 309--313. Springer, 2017.

\bibitem[JZ20]{jagtap2017symbolic}
P.~Jagtap and M.~Zamani.
\newblock Symbolic models for retarded jump--diffusion systems.
\newblock {\em Automatica}, 111:108666, 2020.

\bibitem[KB06]{klein2006experiments}
J.~Klein and C.~Baier.
\newblock Experiments with deterministic $\omega$-automata for formulas of
  linear temporal logic.
\newblock {\em Theoretical Computer Science}, 363(2):182--195, 2006.

\bibitem[Kus67]{kushnerbook}
H.~J. Kushner.
\newblock {\em Stochastic Stability and Control}.
\newblock New York: Academic Press, 1967.

\bibitem[LAB15]{7029024}
M.~Lahijanian, S.~B. Andersson, and C.~Belta.
\newblock Formal verification and synthesis for discrete-time stochastic
  systems.
\newblock {\em IEEE Transactions on Automatic Control}, 60(8):2031--2045, 2015.

\bibitem[LD19]{8404080}
L.~Lindemann and D.~V. Dimarogonas.
\newblock Control barrier functions for signal temporal logic tasks.
\newblock {\em IEEE Control Systems Letters}, 3(1):96--101, 2019.

\bibitem[LSZ18]{lavaei2018compositional}
A.~Lavaei, S.~Soudjani, and M.~Zamani.
\newblock Compositional synthesis of finite abstractions for continuous-space
  stochastic control systems: A small-gain approach.
\newblock {\em IFAC-PapersOnLine}, 51(16):265--270, 2018.

\bibitem[Mar03]{Marta2003}
R.~Mart{\'i}.
\newblock {\em Multi-Start Methods}, pages 355--368.
\newblock Springer US, Boston, MA, 2003.

\bibitem[MMS20]{MMS2020}
R.~Majumdar, K.~Mallik, and S.~Soudjani.
\newblock Symbolic controller synthesis for {B}\"{u}chi specifications on
  stochastic systems.
\newblock In {\em Proceedings of the 23rd International Conference on Hybrid
  Systems: Computation and Control}. ACM, 2020.

\bibitem[NA18]{nilsson2018barrier}
P.~Nilsson and A.~D. Ames.
\newblock Barrier functions: {B}ridging the gap between planning from
  specifications and safety-critical control.
\newblock In {\em 2018 IEEE Conference on Decision and Control (CDC)}, pages
  765--772. IEEE, 2018.

\bibitem[Par03]{parrilo2003semidefinite}
P.~A. Parrilo.
\newblock Semidefinite programming relaxations for semialgebraic problems.
\newblock {\em Mathematical programming}, 96(2):293--320, 2003.

\bibitem[PJP07]{4287147}
S.~Prajna, A.~Jadbabaie, and G.~J. Pappas.
\newblock A framework for worst-case and stochastic safety verification using
  barrier certificates.
\newblock {\em IEEE Transactions on Automatic Control}, 52(8):1415--1428, 2007.

\bibitem[PPP02]{1184594}
S.~Prajna, A.~Papachristodoulou, and P.~A. Parrilo.
\newblock Introducing {SOSTOOLS}: {A} general purpose sum of squares
  programming solver.
\newblock In {\em Proceedings of the 41st IEEE Conference on Decision and
  Control}, volume~1, pages 741--746, 2002.

\bibitem[Pra06]{prajna2006barrier}
S.~Prajna.
\newblock Barrier certificates for nonlinear model validation.
\newblock {\em Automatica}, 42(1):117--126, 2006.

\bibitem[Rat06]{ratschan2006efficient}
S.~Ratschan.
\newblock Efficient solving of quantified inequality constraints over the real
  numbers.
\newblock {\em ACM Transactions on Computational Logic (TOCL)}, 7(4):723--748,
  2006.

\bibitem[Rat17]{ratschan2017simulation}
S.~Ratschan.
\newblock Simulation based computation of certificates for safety of dynamical
  systems.
\newblock In {\em International Conference on Formal Modeling and Analysis of
  Timed Systems}, pages 303--317. Springer, 2017.

\bibitem[RNC{\etalchar{+}}03]{russell2003artificial}
S.~J. Russell, P.~Norvig, J.~F. Canny, J.~M. Malik, and D.~D. Edwards.
\newblock {\em Artificial intelligence: {A} modern approach}, volume~2.
\newblock Prentice hall Upper Saddle River, 2003.

\bibitem[RS15]{ravanbakhsh2015counter}
H.~Ravanbakhsh and S.~Sankaranarayanan.
\newblock Counter-example guided synthesis of control lyapunov functions for
  switched systems.
\newblock In {\em Decision and Control (CDC), 2015 IEEE 54th Annual Conference
  on}, pages 4232--4239. IEEE, 2015.

\bibitem[RS17]{Ravanbakhsh2017ACO}
H.~Ravanbakhsh and S.~Sankaranarayanan.
\newblock A class of control certificates to ensure reach-while-stay for
  switched systems.
\newblock In {\em SYNT@CAV}, 2017.

\bibitem[SA13]{esmaeil2013adaptive}
S.~Soudjani and A.~Abate.
\newblock Adaptive and sequential gridding procedures for the abstraction and
  verification of stochastic processes.
\newblock {\em SIAM Journal on Applied Dynamical Systems}, 12(2):921--956,
  2013.

\bibitem[SAA16]{7335578}
S.~{Soudjani}, D.~{Adzkiya}, and A.~{Abate}.
\newblock Formal verification of stochastic max-plus-linear systems.
\newblock {\em IEEE Transactions on Automatic Control}, 61(10):2861--2876,
  2016.

\bibitem[SAM15]{esmaeilzadehsoudjani}
S.~Soudjani, A.Abate, and R.~Majumdar.
\newblock Dynamic {B}ayesian networks as formal abstractions of structured
  stochastic processes.
\newblock In {\em 26th International Conference on Concurrency Theory},
  volume~42 of {\em LIPIcs}, pages 169--183, 2015.

\bibitem[SCE18]{srinivasan2018control}
M.~Srinivasan, S.l Coogan, and M.~Egerstedt.
\newblock Control of multi-agent systems with finite time control barrier
  certificates and temporal logic.
\newblock In {\em 2018 IEEE Conference on Decision and Control (CDC)}, pages
  1991--1996. IEEE, 2018.

\bibitem[SGA15]{Faust}
S.~Soudjani, C.~Gevaerts, and A.~Abate.
\newblock {FAUST}$^2$: {F}ormal {A}bstractions of {U}ncountable-{ST}ate
  {ST}ochastic processes.
\newblock In {\em Tools and Algorithms for the Construction and Analysis of
  Systems}, pages 272--286. Springer, 2015.

\bibitem[{Sou}14]{soudjani2014formal}
S.~{Soudjani}.
\newblock {\em Formal Abstractions for Automated Verification and Synthesis of
  Stochastic Systems}.
\newblock PhD thesis, Technische Universiteit Delft, The Netherlands, 2014.

\bibitem[SRK{\etalchar{+}}14]{6942758}
I.~Saha, R.~Ramaithitima, V.~Kumar, G.~J. Pappas, and S.~A. Seshia.
\newblock Automated composition of motion primitives for multi-robot systems
  from safe {LTL} specifications.
\newblock In {\em 2014 IEEE/RSJ International Conference on Intelligent Robots
  and Systems}, pages 1525--1532, 2014.

\bibitem[ST12]{steinhardt2012finite}
J.~Steinhardt and R.~Tedrake.
\newblock Finite-time regional verification of stochastic non-linear systems.
\newblock {\em The International Journal of Robotics Research}, 31(7):901--923,
  2012.

\bibitem[ST15]{sebastiani2015optimathsat}
R.~Sebastiani and P.~Trentin.
\newblock Opti{M}ath{SAT}: {A} tool for optimization modulo theories.
\newblock In {\em International Conference on Computer Aided Verification},
  pages 447--454. Springer, 2015.

\bibitem[Stu99]{sturm1999using}
J.~F. Sturm.
\newblock Using {S}e{D}u{M}i 1.02, a matlab toolbox for optimization over
  symmetric cones.
\newblock {\em Optimization methods and software}, 11(1-4):625--653, 1999.

\bibitem[Tab09]{tabuada2009verification}
P.~Tabuada.
\newblock {\em Verification and control of hybrid systems: {A} symbolic
  approach}.
\newblock Springer Science \& Business Media, 2009.

\bibitem[TMKA13]{tkachev2013quantitative}
I.~Tkachev, A.~Mereacre, J-P. Katoen, and A.~Abate.
\newblock Quantitative automata-based controller synthesis for non-autonomous
  stochastic hybrid systems.
\newblock In {\em Proceedings of the 16th international conference on Hybrid
  systems: computation and control}, pages 293--302. ACM, 2013.

\bibitem[WA07]{wieland2007constructive}
P.~Wieland and F.~Allg{\"o}wer.
\newblock Constructive safety using control barrier functions.
\newblock {\em IFAC Proceedings Volumes}, 40(12):462--467, 2007.

\bibitem[WTL16]{7364197}
T.~Wongpiromsarn, U.~Topcu, and A.~Lamperski.
\newblock Automata theory meets barrier certificates: Temporal logic
  verification of nonlinear systems.
\newblock {\em IEEE Transactions on Automatic Control}, 61(11):3344--3355,
  2016.

\bibitem[ZMEM{\etalchar{+}}14]{zamani2014symbolic}
M.~Zamani, P.~Mohajerin~Esfahani, R.~Majumdar, A.~Abate, and J.~Lygeros.
\newblock Symbolic control of stochastic systems via approximately bisimilar
  finite abstractions.
\newblock {\em IEEE Transactions on Automatic Control}, 59(12):3135--3150,
  2014.

\bibitem[ZPV19]{zhu2019first}
S.~Zhu, G.~Pu, and M.~Y. Vardi.
\newblock First-order vs. second-order encodings for {LTL}$_f$-to-automata
  translation.
\newblock {\em arXiv preprint arXiv:1901.06108}, 2019.

\bibitem[ZTA17]{zamani2017towards}
M.~Zamani, I.~Tkachev, and A.~Abate.
\newblock Towards scalable synthesis of stochastic control systems.
\newblock {\em Discrete Event Dynamic Systems}, 27(2):341--369, 2017.

\end{thebibliography}

\end{document}